\documentclass[11pt,a4paper]{article}

\usepackage[utf8]{inputenc}        
\usepackage[T1]{fontenc}           
\usepackage{mathpazo}              
\usepackage{microtype}             
\usepackage{setspace}              
\usepackage{amsmath, amssymb, amsthm} 
\usepackage{graphicx}              
\usepackage{booktabs}              
\usepackage{enumitem}              
\usepackage{xcolor}                
\usepackage[hidelinks]{hyperref}   

\usepackage[
    left=1.2in,
    right=1.2in,
    top=1.0in,
    bottom=1.2in
]{geometry}

\title{\textbf{Stochastic gradient descent on the epigenetic landscape: a unified framework for cellular plasticity, tumor heterogeneity, and the asymptotic irrelevance of fitness}}
\author{
    Artur César Fassoni\textsuperscript{1,2,*} \\
    \small \textsuperscript{1}Instituto de Matemática e Computação, Universidade Federal de Itajubá, Itajubá, Brazil \\
    \small \textsuperscript{2}Carl Gustav Carus School of Medicine, Technische Universität Dresden, Dresden, Germany \\
    \small \textsuperscript{*}\texttt{fassoni@unifei.edu.br}
}

\date{
}

\newtheorem{theorem}{Theorem}

\newtheorem{lemma}{Lemma}

\theoremstyle{remark}
\newtheorem{remark}{Remark}
\theoremstyle{definition}

\newcommand{\one}{\mathbf 1}
\newcommand{\abs}[1]{\lvert #1\rvert}
\newcommand{\R}{\mathbb{R}}

\begin{document}

\maketitle

\begin{abstract}

Phenotypic plasticity, the ability of cells to switch between states, is central to development, differentiation, and therapy resistance. Although it is modeled at several scales, from compartmental ODEs to phenotype-structured PDEs and single-cell stochastic equations, a framework connecting these descriptions is missing. We present such a framework. Starting from a general $n$-compartment ODE model encompassing nonlinear growth and linear transitions between phenotypes, we show, with a new, elementary and generalizable proof of a recent theorem, that under uniform competition, the long-term population distribution is solely governed by transition rates. All phenotypes become selectively neutral at saturation, and the imprint of fitness differences during growth fades at an explicit rate. Restricting transitions to neighboring states transforms the model into a discretization of a phenotype-structured reaction-diffusion-advection PDE. In this continuum model, diffusion and advection are identified from switching rates, and fitness remains asymptotically irrelevant. Interpreting the advection velocity as the negative gradient of an effective epigenetic potential transforms the PDE into a Fokker--Planck equation and converts single-cell trajectories into stochastic gradient descent (SGD) in the Langevin sense on the phenotypic landscape. Non-local transitions, such as mutations, are incorporated via an integro-differential term, yielding a unified model with reaction, gradient flow, diffusion, and jumps. State-dependent noise reshapes the effective landscape without altering the underlying potential. This gives cancer a route to elevated plasticity that static, single-cell snapshots cannot distinguish from a changed landscape. This framework provides a physical interpretation of Waddington's landscape, where cells perform SGD, and cancer corresponds to a corrupted landscape.
\end{abstract}

\small \noindent Keywords: phenotypic plasticity, epigenetic landscape, cancer heterogeneity, compartmental model, phenotype-structured model, Fokker-Planck equation, Ornstein-Uhlenbeck process.

\section{Introduction}

Cellular plasticity, broadly defined, is the ability of a cell to change its state or phenotype without altering its genome \cite{grafenver2009,merrellstanger2016}. By \emph{state} we mean a typically heritable, self-sustaining pattern of gene expression, such as the ones that distinguish a stem cell from a differentiated neuron or a drug-sensitive from a drug-resistant cell; it persists across cell divisions and on the time scale of interest without external maintenance, but is not fixed by the DNA sequence itself. This property is fundamental to multicellular organisms: almost all cells of an individual share essentially the same DNA, yet display a wide range of phenotypes. From fertilization onward, differentiation is regulated by epigenetic changes that channel initially totipotent cells toward specialized fates \cite{bird2007}, and these epigenetic barriers are what keep tissue organization stable and prevent uncontrolled de-differentiation. Waddington captured this with his ``epigenetic landscape" metaphor \cite{waddington1957}: a state corresponds to a valley in this landscape, and plasticity to the topography of slopes, barriers, and routes that favors certain transitions between valleys and blocks others. Widening a valley, lowering a barrier, or opening a new route lets cells reach states that would normally be closed to them.

At the molecular level, epigenetic alterations are largely mediated by histone modifications \cite{jenuwein2001,kouzarides2007,allisjenuwein2016}, which determine which genes remain accessible for transcription. Stable phenotypes are self-sustaining configurations of this regulatory machinery, into which a cell settles and from which it escapes only through fluctuations. These configurations are not thermodynamic equilibria, since histone marks are written and erased by enzymes that consume energy \cite{dodd2007,michieletto2016,alarcon2026}, but their dynamics can be summarized by an effective potential, or quasi-potential \cite{wang2015,zhou2016}. Waddington's landscape then has a physical counterpart: its valleys are minima of this potential, and transitions between phenotypes are stochastic excursions between them.

Cancer disrupts this order \cite{cagan2026}. Tumors accumulate genetic and phenotypic alterations \cite{lv2026}, and intratumor heterogeneity has traditionally been understood as the product of somatic evolution \cite{marusykpolyak2010,greavesmaley2012}: reiterated rounds of clonal expansion, genetic diversification and selection shape the clonal architecture of a tumor \cite{vendramin2021,gerlinger2012}. Increased cellular plasticity, recently recognized as one of the hallmarks of cancer \cite{hanahan2022}, adds a second, non-genetic source of diversity \cite{meachammorrison2013}: malignant cells gain access to alternative phenotypes that are not encoded by new mutations \cite{yuanstanger2019,flavahan2017}.

The clinical relevance of this second source is most evident in therapy resistance. In several tumor types, including breast, prostate, lung, melanoma, and glioblastoma, subpopulations of cells resist chemotherapeutics or targeted therapies without any underlying genetic change \cite{sharma2010,marinedawson2020,pisco2013,pisco2015}. In immunotherapies such as CAR-T cells, a major resistance mechanism is antigen loss, in which tumor cells stop expressing the target antigen through epigenetic or post-transcriptional mechanisms and settle into stable phenotypic states invisible to the therapy \cite{shahfry2019,santurio2024,majznermackall2019}. Resistance can therefore arise by two routes: the selection of pre-existing resistant clones, or the emergence of resistance de novo from drug-tolerant persister cells \cite{dagogojackshaw2018}. Heterogeneity itself is, in turn, associated with prognosis \cite{ferrallfairbanks2019,morris2016}.

Because genetic and non-genetic changes jointly shape tumor evolution and its response to treatment, it is desirable to describe them within a single framework. The landscape picture introduced above offers one: if we relax the definition of plasticity slightly and conceptualize it as the ability to transition among states in an extended space that includes both phenotypes and genotypes, mutations correspond to discrete jumps in that space. Genetic instability seeds cells carrying driver mutations \cite{nowell1976,hanahanweinberg2011}, and the elevated epigenetic plasticity of tumors lets these founder cells reach distant phenotypic states that are normally off-limits to healthy cells \cite{hanahan2022}.

In this picture, the difference between healthy and malignant tissue is one of landscape geometry and noise: in normal tissue, cells rest in deep attractors under low noise, whereas in tumors transcriptional programs are noisier and cells move more readily between attractors \cite{huang2009,marusyk2012,feinbergirizarry2010}. The framework developed here turns this qualitative picture into a quantitative one: selection enters through the vital dynamics (birth and death), plasticity through the transport terms, noise as a diffusivity, and mutation through non-local jumps. It allows us to ask when each of these ingredients shapes the phenotypic composition of a tumor, and on which time scale.

Mathematical models of these phenomena roughly follow the arc of increasing complexity we develop below. Compartmental, ordinary differential equation (ODE) models describe each cellular state as a compartment with its own vital dynamics, and plasticity as transition rates between compartments; they have been used for differentiation hierarchies, therapy resistance and immune escape \cite{marciniakczochra2009,michor2005,roeder2006,fassoni2018,strobl2021,zhoutraulsen2019,gavrilova2026,santurio2024,zhou2013,gunnarsson2020}. Driven by single-cell data and dimensionality reduction \cite{coifmanlafon2006,haghverdi2016}, another perspective holds that phenotypic states form a continuum \cite{trapnell2014}: between a stem cell and a progenitor, or between drug-sensitive and drug-resistant states, lie intermediate phenotypes. The continuum limit of compartmental models has accordingly grown into a substantial body of work on phenotype-structured partial differential equations (PS-PDEs) \cite{perthame2007,lorz2011,villa2025,clairambault2022,agostinelli2026,chisholm2015,lorenzi2015,lorenzi2016,clairambaultpouchol2019,cho2018,singh2025population,de2024analysis}. At the level of individual cells, stochastic differential equations (SDEs), notably Ornstein--Uhlenbeck (OU) processes, describe phenotypic change as a restoring force perturbed by noise \cite{desouzasilva2023,kessler2022,parklevine2025}.

Alongside this arc, a parallel tradition in statistical physics formalizes Waddington's landscape as a potential, or quasi-potential, of the underlying stochastic dynamics \cite{huang2005,ao2007,wang2011,wang2015,zhou2016,coomer2022}; dynamical-systems and geometric approaches relate fate decisions to bifurcations and low-dimensional landscapes \cite{ferrell2012,moris2016,corson2012,rand2021,saez2022}; and more recent work derives effective landscapes from chromatin dynamics, Bayesian decision-making or evolvability \cite{alarcon2026,entropy2026bayesian,jimenezsanchez2026}. These works describe plasticity in different ways: discrete or continuous, phenomenological or mechanistic, at the level of the population or of single stochastic cells. Yet they are seldom connected explicitly. Population models usually posit transition rates or transport terms phenomenologically, whereas landscape models derive them from an underlying mechanism but tend to leave out selection and mutation.

Here we combine results from applied mathematics, statistical physics and quantitative biology into a framework that describes the epigenetic and genetic evolution of cell populations in a continuous state space, with an emphasis on the mechanistic connections between these approaches. The article is both a \emph{review} and a \emph{synthesis}: it follows a single line from compartmental ODE models, through PS-PDEs, to SDEs for the trajectories of individual cells. Section~\ref{sec:model} develops this construction step by step; each of its subsections ends with a \emph{Relation to prior work} paragraph that credits the work it builds on and says where our construction differs. The literature is large, and we focus on the work most directly relevant to the mathematical structure developed here; readers looking for broader perspectives can consult the tutorial of Lorenzi et al.~\cite{villa2025} on phenotype-structured PDEs, and the works of Alvarez et al.~\cite{clairambault2022} on plasticity and evolution and of de Souza Silva et al.~\cite{desouzasilva2023} on the Ornstein--Uhlenbeck framework in quantitative genetics.

Following this line, our main contributions are:

\begin{itemize}
\item A new, generalizable proof, under different hypotheses and with explicit estimates, of a recent theorem of Giaimo et al.~\cite{giaimo2025} stating that, in \emph{uniformly competitive} compartmental models, fitness is asymptotically irrelevant, i.e., the long-time phenotypic composition does not depend on the fitness differences among states. Our version gives an explicit estimate of how fast fitness differences are forgotten and a bound on the total selection that the phenotypic composition can undergo. We also state the conditions required (connected plasticity, persistent proliferation) and give a counterexample showing that uniform competition cannot be relaxed to weighted competition (Sections~\ref{sec:compartiment} and~\ref{sec:uniform_competition}, Appendix~A).

\item An interpretation of compartmental models with local transitions as discretizations of a phenotype-structured reaction--diffusion--advection PDE, with diffusion and advection identified from the switching rates (Section~\ref{sec:continuous}), and the extension of the asymptotic irrelevance of fitness under uniform competition to this continuum analogue (Section~\ref{sec:pdeasymp}).

\item An interpretation of this PDE as the Fokker--Planck equation of single cells performing a stochastic gradient descent on an effective epigenetic potential, with noise as an effective temperature (Sections~\ref{sec:stochastic} and~\ref{sec:gradient}), and a unified model of reaction, gradient flow, diffusion, and jumps, whose structure is motivated by the Kramers--Moyal expansion and Pawula's theorem (Section~\ref{sec:nonlocal}).

\item A derivation of classical phenotypic distributions (uniform, exponential, Gaussian, heavy-tailed) from the potential and the diffusivity, together with the observation that state-dependent noise reshapes the effective landscape independently of the bare potential (Section~\ref{sec:examples}).
\end{itemize}

Section~\ref{sec:conclusion} discusses the scope of these results and some open problems.

\section{From discrete switching between cell states to stochastic gradient descent in the epigenetic landscape}
\label{sec:model}

\begin{figure}[htbp]
\centering
\includegraphics[width=0.94\textwidth]{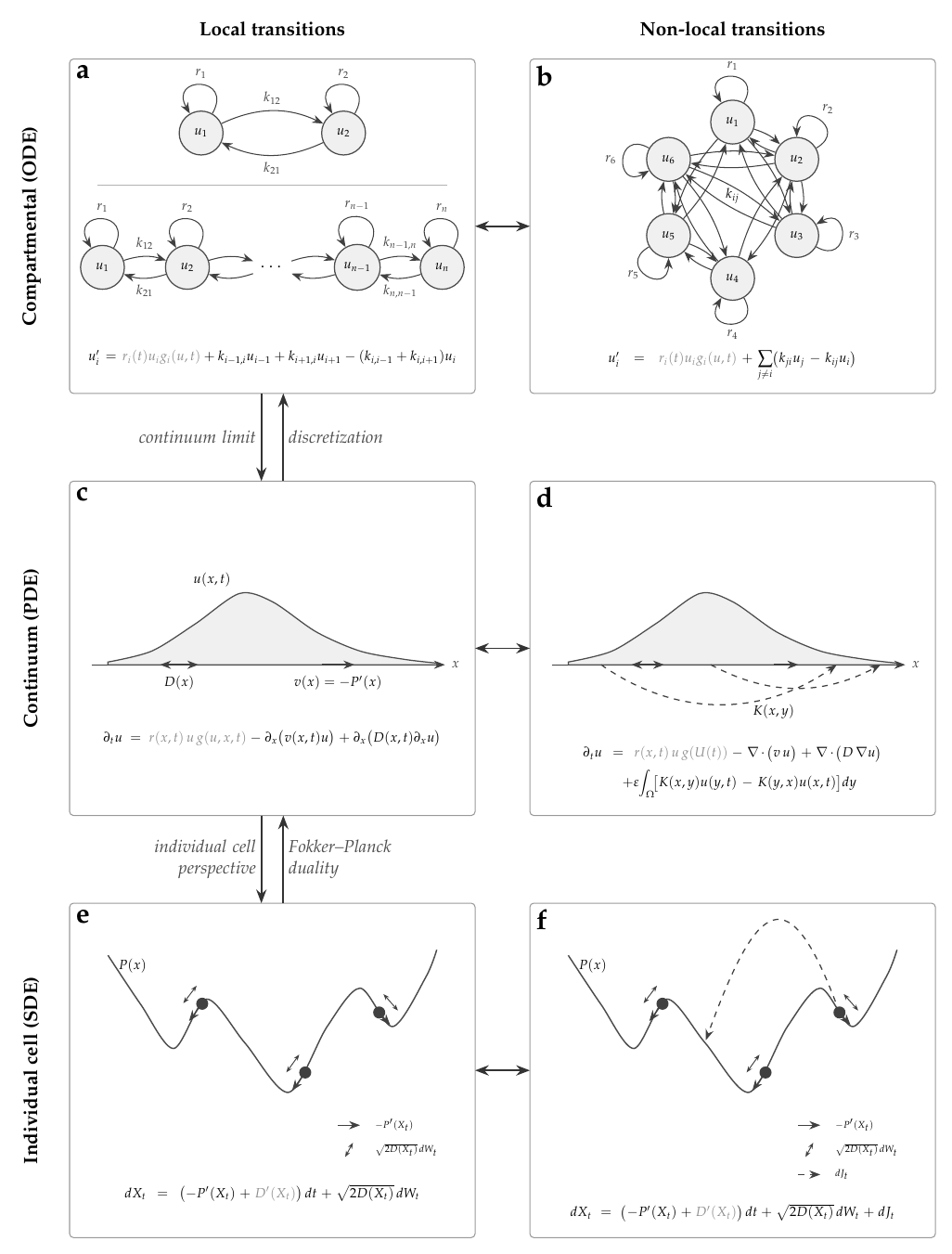}
\caption{\footnotesize\linespread{1}\selectfont\textbf{Roadmap of the modeling framework.} Rows correspond to the level of description (compartmental ODEs, top; continuum PDEs, middle; individual-cell SDEs, bottom); columns to whether transitions are local (left) or non-local (right). The reaction (vital-dynamics) term is shown in gray to indicate that it becomes asymptotically negligible under uniform competition (Theorems~\ref{thm:disc} and \ref{thm:cont}) or vanishes at equilibrium. Horizontal double-headed arrows connect the local and non-local versions of the same level of description; vertical arrows connect the continuum limit/discretization (rows 1--2) and the Fokker--Planck/individual-trajectory duality (rows 2--3). \textbf{(a)} Two-compartment model, eq.~\eqref{eq:modelo2d} (top), and its local $n$-compartment generalization, eq.~\eqref{eq:modelondloc} (bottom). \textbf{(b)} The same construction without the nearest-neighbor restriction, eq.~\eqref{eq:modelond}. \textbf{(c)} Continuum limit of (a): the advection--diffusion equation \eqref{eq:eqlinssprob}, with advection velocity interpreted as a negative gradient, $v(x) = -P'(x)$. \textbf{(d)} Adding a non-local jump term with kernel $K(x,y)$ yields the unified model \eqref{eq:unified}. \textbf{(e)} The single-cell SDE \eqref{eq:SDEgradient}: a stochastic gradient descent on the potential $P(x)$, with drift toward the nearest minimum and noise that lets a cell escape shallow ones by crossing the barriers between them. \textbf{(f)} Adding jumps $dJ^{\varepsilon}_t$ (eq.~\eqref{eq:unifiedSDE}) lets a cell relocate directly across the landscape.}
\label{fig:roadmap}
\end{figure}

\subsection{Compartmental ODE models for cellular states}
\label{sec:compartiment}

We start with the compartmental perspective, in which each cellular state, a differentiation stage, an epigenetic configuration, or a drug-sensitive versus drug-resistant phenotype, is a discrete compartment whose abundance evolves according to an ODE for proliferation, death, and transitions between states. This is the simplest mathematical representation of phenotypic plasticity, and the basis for the more general and mechanistic continuous descriptions that follow (Figure~\ref{fig:roadmap}a,b).

\medskip
\noindent\textbf{A simple two-compartment framework.} To fix ideas, consider the following simple model of two interconverting cellular states:
\begin{equation}\label{eq:modelo2d}
\begin{aligned}
    u_1' &= -k_{12} u_1 + k_{21} u_2 + r_1 u_1 \left(1 - \frac{u_1+u_2}{K}\right), \\
    u_2' &= -k_{21} u_2 + k_{12} u_1 + r_2 u_2 \left(1 - \frac{u_1+u_2}{K}\right) - d_2(t) u_2.
\end{aligned}
\end{equation}
Here $u_1(t)$ and $u_2(t)$ denote the numbers of cells in states $1$ and $2$, respectively; $r_1, r_2 > 0$ are intrinsic, net proliferation rates; $K$ is a carrying capacity; and $d_2(t) \ge 0$ is a time-dependent death rate acting on state $2$. Constants $k_{12}$ and $k_{21}$ are transition rates between states $1$ and $2$. These switching terms represent phenotypic plasticity.

Model \eqref{eq:modelo2d}, simple as it is, has been used in several biological contexts \cite{gavrilova2026,strobl2021,fassoni2018,santurio2024}. For instance, in drug resistance, $u_1$ denotes drug-resistant cells and $u_2$ drug-sensitive cells. The additional mortality $d_2(t)$ represents the effect of therapy. Cells can acquire or lose the resistant phenotype via $k_{12}$ and $k_{21}$, and therapy selects the resistant population \cite{strobl2021}. In models for chronic myeloid leukemia treatment with TKIs, $u_1$ and $u_2$ represent quiescent and proliferative leukemic stem cells, with $r_1 = 0$ reflecting quiescence and $d_2(t)$ capturing TKI-induced death of cycling cells \cite{fassoni2018}. In CAR-T therapy, $u_1$ represents antigen-negative tumor cells and $u_2$ antigen-positive cells, while the transitions $k_{ij}$ reflect epigenetic modulation of antigen expression \cite{santurio2024}.

In all these examples the model has the same two components: \emph{vital dynamics} (proliferation and competition) and \emph{transition dynamics} (phenotypic switching).

\medskip
\noindent\textbf{A general model with $n$ compartments.} We now generalize model \eqref{eq:modelo2d} to an arbitrary number of states and, later, to a continuous phenotype space. Consider $n$ compartments, where $u_i(t)$ is the abundance in state $i$, governed by
\begin{equation}\label{eq:modelond}
    u_i' = r_i(t)\, u_i \, g_i(u,t) + \sum_{j \neq i} \bigl( k_{ji} u_j - k_{ij} u_i \bigr), \qquad i = 1, \dots, n.
\end{equation}
Here, $r_i(t) \ge 0$ is the net proliferation rate, possibly time-dependent but uniformly bounded ($r_i(t) \le r_{\max}$ for some constant $r_{\max}>0$); $g_i(u,t)$ limits growth (density dependence, competition); its specific forms include logistic, Gompertz, or competitive Lotka--Volterra forms, and will be discussed in Section~\ref{sec:uniform_competition}; finally, $k_{ij} \ge 0$ is the transition rate from compartment $i$ to compartment $j$, encoding phenotypic plasticity.

\medskip
\noindent\textbf{Matrix form.} It is convenient to rewrite the $n$-compartment model \eqref{eq:modelond} in matrix notation. Let
\[
u(t) = \begin{bmatrix} u_1(t) \\ \vdots \\ u_n(t) \end{bmatrix}
\]
be the column vector of cell numbers in each state. Define the diagonal matrices
\[
R(t) = \operatorname{diag}\bigl(r_1(t), \dots, r_n(t)\bigr), \qquad
\operatorname{diag}(u) = \operatorname{diag}\bigl(u_1(t), \dots, u_n(t)\bigr),
\]
and the vector of growth modulation functions
\[
g(u,t) = \begin{bmatrix} g_1(u,t) \\ \vdots \\ g_n(u,t) \end{bmatrix}.
\]
The transition rates $k_{ij} \ge 0$ are collected into the $n \times n$ matrix $A = [a_{ij}]$ defined by
\begin{equation}\label{eq:defA}
    a_{ij} = 
    \begin{cases}
        k_{ji}, & i \neq j, \\[4pt]
        -\displaystyle\sum_{l \neq i} k_{il}, & i = j.
    \end{cases}
\end{equation}
Note that the off-diagonal indexes are transposed since the inflow into state $i$ from state $j$ is proportional to the abundance $u_j$. With these notations, system \eqref{eq:modelond} takes the compact form
\begin{equation}\label{eq:modelondMat}
    u' = R(t) \, \operatorname{diag}(u) \, g(u,t) + A u.
\end{equation}
The first term, $R(t) \operatorname{diag}(u) g(u,t)$, corresponds to the \emph{nonlinear vital dynamics} (proliferation, death, and competition), while the second, $A u$, corresponds to the \emph{linear transition dynamics}, describing how cells switch between states.

\medskip
\noindent\textbf{Relation to prior work.} Two-compartment models of the form \eqref{eq:modelo2d} are a workhorse of mathematical oncology, developed largely independently across applications. In breast cancer, Gavrilova et al.~\cite{gavrilova2026} model switching between HER2-positive and HER2-negative states to inform therapy sequencing. A similar two-compartment structure, simplifying previous approaches \cite{michor2005,roeder2006,getto2013}, describes quiescent versus cycling leukemic stem cells in chronic myeloid leukemia and, fitted to phase III trial data, predicts that many patients could be safely kept on a reduced tyrosine-kinase-inhibitor dose \cite{fassoni2018}. In adaptive therapy, Strobl et al.~\cite{strobl2021} use an analogous drug-sensitive/drug-resistant Lotka--Volterra model, fitted to longitudinal PSA data, to identify when competitive suppression of resistance is achievable; and in CAR-T therapy, Santurio et al.~\cite{santurio2024} use the same two-state architecture for antigen-positive and antigen-negative relapse. Models with more compartments represent the hierarchy of hematopoiesis \cite{marciniakczochra2009} and its disruption in chronic myeloid leukemia \cite{michor2005,roeder2006}. Here, we place these examples as a starting point within a common architecture, whose asymptotic and continuum behavior can be studied independently of the biological application.

\subsection{Models with uniform competition}
\label{sec:uniform_competition}

The main structural feature of model \eqref{eq:modelond} is the separation between vital and transition dynamics. As we show next, in a class of models with uniform competition, the long-term composition of the population depends on the latter, and, perhaps surprisingly, not on the former. This result is illustrated by the gray-shaded reaction terms in Figure~\ref{fig:roadmap}a,b.

\medskip
\noindent\textbf{Uniform versus non-uniform competition.}
In general, the modulation functions $g_i(u,t)$ may depend on the full population vector $u$ and differ among compartments. The most prominent example is the competitive Lotka--Volterra model,
\begin{equation}\label{eq:glv}
    g_i(u,t) = 1 - \frac{1}{K}\sum_{j=1}^n c_{ij} u_j,
\end{equation}
where the coefficients $c_{ij} \ge 0$ quantify the competitive pressure exerted by cells in state $j$ on cells in state $i$ \cite{zeeman1993}. In such \emph{non-uniform competition} models, the long-term behavior can depend strongly on the proliferation rates $r_i$ and on the interaction matrix $(c_{ij})$, leading to multiple attractors representing extinction, coexistence, bistability, etc.\ \cite{zeeman1993,zeemanvandendriessche1998,fassoni2019resilience}.

For many biological scenarios, however, especially phenotypic plasticity in cancer, it is plausible that all cell types compete for the same limited resources: although the compartments may proliferate at different rates, the brake on proliferation is the same for everyone and depends only on how crowded the tissue is. We call this the \emph{uniform competition} hypothesis. Under it, the long-term composition of the population does not depend on the proliferation rates $r_i$ and is determined by the transition matrix $A$ alone.

\medskip
\noindent\textbf{Asymptotic behavior under uniform competition.}
We consider model \eqref{eq:modelond} under four assumptions.

\begin{description}[leftmargin=2.2em, itemsep=3pt]
    \item[(H1) Common modulation.] All compartments feel the same brake on proliferation, which depends only on the total population $U=\sum_i u_i$: $g_i(u,t)=g(U)$ for all $i$.
    \item[(H2) Stable carrying capacity.] $g$ is continuously differentiable and has a unique zero $U^*>0$, with $g>0$ for $0<U<U^*$ and $g<0$ for $U>U^*$.
    \item[(H3) Connected plasticity.] Every state can be reached from every other state through a sequence of transitions with positive rates, i.e., the directed graph with an edge $i\to j$ whenever $k_{ij}>0$ is strongly connected.
    \item[(H4) Bounded and persistent proliferation.] The rates $r_i(t)$ are piecewise continuous with $0\le r_i(t)\le r_{\max}$, and, from some time $t_r\ge0$ on, at least one state $s$ proliferates persistently: $r_s(t)\ge r_{\min}>0$ for $t\ge t_r$.
\end{description}

Hypotheses (H1)--(H2) cover the logistic, generalized logistic, Gompertz and von Bertalanffy laws, $g(U)=1-U/K$, $1-(U/K)^\nu$, $\ln(K/U)$, $aU^{\gamma-1}-b$, among others \cite{benzekry2014classical, giaimo2025}; some of these also emerge from microscopic principles as manifestations of local contact inhibition \cite{universal2026contact}. Hypothesis (H3) expresses reversible plasticity: there are no absorbing states and no isolated groups of states. It holds, for instance, for sensitive/resistant or antigen-positive/antigen-negative switching, but fails for a strict differentiation hierarchy without any de-differentiation. The proof only requires that switching forgets its initial state. Thus, (H3) can be weakened \cite{companion2026}: it is sufficient that there is a single closed class of states reachable from every state and that the persistently proliferating state of (H4) belongs to it. The other states are transient, and the limit vanishes on them. Hypothesis (H4) allows quiescent states ($r_i=0$, as for quiescent leukemic stem cells) and time-dependent rates. It only requires that one compartment continue to proliferate from a certain point onward. Both (H3) and (H4) are needed: without switching ($A=0$), $u_1/u_2=\bigl(u_1(0)/u_2(0)\bigr)\exp\bigl((r_1-r_2)\int_0^tg(U)\,ds\bigr)$ retains the fitness difference and the initial data forever, and if all $r_i$ vanish after some time, the total population freezes before reaching $U^*$.

Under (H1)--(H4), the full model \eqref{eq:modelondMat} reads in matrix form as
\begin{equation}\label{eq:modelondMat2}
    u' = g(U) R(t) u + A u.
\end{equation}
Under (H3), the switching dynamics alone, given by
\begin{equation}
    u'=Au
    \label{eq:modelondg0}
\end{equation}
is the master equation of a continuous-time Markov chain \cite{disneyclarke1985} that is irreducible, with transition rate matrix $A$. The columns of $A$ sum to zero (mass conservation), and zero is a simple eigenvalue with left eigenvector $\mathbf{1}$ and right eigenvector $\pi$, while all other eigenvalues have negative real parts. Thus the linear system has a unique stationary distribution $\pi$ ($A\pi=0$, $\sum_i\pi_i=1$) with all entries positive. Moreover, it forgets its initial condition exponentially fast: there is a constant $\lambda>0$, depending only on the transition rates $k_{ij}$, such that solutions of \eqref{eq:modelondg0} approach $\pi$ (up to their total mass) at rate $e^{-\lambda t}$ (Lemma~\ref{lem:mix} in Appendix~A). Biologically, this means that plasticity erases lineage memory. A cell that starts in phenotype $i$ and switches according to the rates $k_{ij}$ will, after a time long compared with $1/\lambda$, be found in state $j$ with probability $\pi_j$, regardless of $i$; equivalently, a population with any initial composition relaxes to the composition $\pi$. Thus $1/\lambda$ is the \emph{mixing time} of the plasticity network. It is short when switching is fast and long when some transitions between groups of states are rare.

Theorem \ref{thm:disc} states that under (H1)--(H4), the full model \eqref{eq:modelondMat2} is asymptotically equivalent to the linear switching dynamics \eqref{eq:modelondg0}.

\begin{theorem}\label{thm:disc}
Assume (H1)--(H4) and let $u(t)$ be a solution of \eqref{eq:modelondMat2} with $u_i(0)\ge0$ and $U(0)>0$. Then:
\begin{enumerate}[label=(\alph*), itemsep=2pt, topsep=2pt]
\item the total population $U(t)$ is monotone and converges to the carrying capacity $U^*$;
\item the composition converges to the stationary distribution of the switching dynamics: $u(t)\to U^*\pi$;
\item consequently, the full system \eqref{eq:modelondMat2} and the purely linear transition system \eqref{eq:modelondg0} are asymptotically equivalent: $|u(t)-u^*(t)|\to0$ for \emph{every} solution $u^*$ of \eqref{eq:modelondg0} with total population $U^*$.
\end{enumerate}
\end{theorem}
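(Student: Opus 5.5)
The plan is to separate the dynamics into the total population $U=\mathbf 1^\top u$ and the composition $p=u/U$. Since the columns of $A$ sum to zero, $\mathbf 1^\top A=0$. Summing \eqref{eq:modelondMat2} then gives the scalar equation
\[
U'=g(U)\,\rho(t)\,U,\qquad \rho(t)=\sum_i r_i(t)\,p_i(t)\in[0,r_{\max}].
\]
Hence $U'$ has the sign of $g(U)$. By (H2) and uniqueness of solutions, $U$ cannot cross $U^*$: the value $U^*$ is invariant, since $U'=0$ there. So $U$ is monotone and stays between $U(0)$ and $U^*$, which is the monotonicity part of (a).

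A direct computation gives the composition equation
\[
p'=Ap+g(U)\,\bigl(R(t)-\rho(t)I\bigr)p .
\]
The second term is bounded entrywise below by $-c\,p$, where $c=r_{\max}\max|g|$ and the maximum is taken over the compact interval between $U(0)$ and $U^*$. Nonnegativity of $u$ follows because $A$ is Metzler.

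The key step, and the one I expect to be the main obstacle, is a uniform lower bound on the proliferating fraction $p_s$. This bound is what rules out the scenario in which $U$ freezes below $U^*$. I would use a comparison argument for the cooperative linear system $q'=(A-cI)q$. Because $A$ is Metzler, $p(t+\tau)\ge e^{-c\tau}e^{A\tau}p(t)$ entrywise for $\tau\ge0$. By (H3), $e^{A}$ has all entries at least some $m>0$. Since $\sum_i p_i=1$, this gives $p_i(t+1)\ge e^{-c}m=:\delta>0$ for all $i$ and all $t\ge0$.

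Therefore $\rho(t)\ge r_{\min}\delta$ for $t\ge\max(t_r,1)$, using (H4). Now let $L=\lim U(t)$, which exists by monotonicity. If $L\neq U^*$, then $|g(U)|\ge\eta>0$ on the relevant range. Then $|U'|\ge \eta\, r_{\min}\delta\,\min(U(0),L)>0$ eventually, which contradicts convergence. Thus $U\to U^*$, completing (a).

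For (b), I view the composition equation as $p'=Ap+\varepsilon(t)$, where $|\varepsilon(t)|\le 2r_{\max}|g(U(t))|\to0$ by (a) and continuity of $g$. Variation of constants gives
\[
p(t)-\pi=e^{A(t-t_0)}\bigl(p(t_0)-\pi\bigr)+\int_{t_0}^t e^{A(t-s)}\varepsilon(s)\,ds .
\]
Both $p(t_0)-\pi$ and $\varepsilon(s)$ have zero total mass, because $\mathbf 1^\top p'=0$. So Lemma~\ref{lem:mix} bounds $e^{A\tau}$ on such vectors by $Ce^{-\lambda\tau}$. The first term therefore decays. The integral is at most $(C/\lambda)\sup_{s\ge t_0}|\varepsilon(s)|$, which is small once $t_0$ is large. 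Hence $p\to\pi$, and $u=Up\to U^*\pi$. If $g$ is $C^1$ and $U$ converges exponentially, this argument also yields an explicit rate, but that is not needed here.

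For (c), any solution $u^*$ of \eqref{eq:modelondg0} conserves its total mass, here $U^*$, so Lemma~\ref{lem:mix} gives $u^*(t)\to U^*\pi$. The triangle inequality $|u-u^*|\le|u-U^*\pi|+|U^*\pi-u^*|$ then yields $|u(t)-u^*(t)|\to0$.
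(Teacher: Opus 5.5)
Your proof is correct, but it runs in the opposite order from the paper's. The paper works with the unnormalized deviation $d=u-U\pi$ and uses (H1) to note that every component of the reaction term has the sign of $g(U)$, so its $\ell^1$ size is exactly $\abs{U'}$. Since $U$ is monotone, this forcing is integrable, and variation of constants gives $d\to0$ \emph{before} the limit of $U$ is known. Only then does the paper exclude $U_\infty\neq U^*$: the eventual bound $u_s\ge\tfrac12 m\pi_s$, inherited from $d\to0$, gives $\int_0^\infty\rho\,dt=\infty$, whereas $\int_0^\infty\rho\,dt<\infty$ if $U$ stopped short of $U^*$.

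You instead obtain persistence directly. The cooperative comparison $p'\ge(A-cI)p$, together with strict positivity of $e^{A}$, gives the explicit uniform bound $p_i(t)\ge e^{-c}\min_{ij}(e^{A})_{ij}$ for all $t\ge1$ and all $i$. This settles (a) on its own, and (b) then follows from the softer fact that the forcing $\varepsilon(t)$ tends to zero.

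Your route buys a quantitative statement the paper does not make: every phenotype keeps a fixed minimal share at all times after $t=1$. It also decouples the convergence of $U$ from that of the composition. The paper's route buys the estimate \eqref{eq:star}, in which the forcing is measured by $\abs{U'}$ rather than by $r_{\max}\abs{g(U)}$. Consequently no rate for $g(U(t))\to0$ is needed, and the total selection is bounded by $2\abs{\ln(U^*/U(0))}$ (Remark~\ref{rem:freq}). The convergence of the composition to $\pi$ also survives without the persistence part of (H4), whereas your bound $2r_{\max}\abs{g(U)}$ only vanishes once (a) is known. Your own estimate can be sharpened to $\abs{\varepsilon(t)}_1\le 2\abs{g(U)}\rho=2\abs{(\ln U)'}$, which is integrable and recovers exactly the paper's forcing bound within your framework.
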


Thus, in uniformly competitive models the long-term distribution of cells among compartments is governed only by the transition rates $k_{ij}$, i.e., by the plasticity architecture. The proliferation rates $r_i(t)$, and hence the intrinsic fitness of each phenotype, do not enter the final composition, and $g$ enters only through $U^*$. Plasticity, and not differences in fitness, sets the long-term heterogeneity of the population. The proof of Theorem~\ref{thm:disc} is given in Appendix~A.

\medskip
\noindent\textbf{Fitness leaves only a transient mark.} Theorem~\ref{thm:disc} rests on two ideas. First, at saturation all phenotypes are selectively neutral: the net growth rate of state $i$ is $r_i\,g(U)$, which vanishes at $U=U^*$ whatever the $r_i$, so the composition is shaped by transitions alone, as in the neutral theory of molecular evolution \cite{kimura1983}. Second, selection can act only while the population is away from its carrying capacity, and the mark it leaves is erased by switching. A population near $U^*$ will be barely affected by fitness differences, while a large expansion, such as the regrowth of a tumor after therapy, can shift the composition toward the fastest-proliferating states. This shift, however, then fades within the mixing time $1/\lambda$, which depends only on the transition rates $k_{ij}$ (Appendix~A). Therefore, the statement ``fitness is irrelevant'' is asymptotic. When switching is slow, as discussed in Section~\ref{sec:gradient}, the transient can be biologically long because the value of $\lambda$ is small.

\medskip
\noindent\textbf{Treatments.} The two-compartment model \eqref{eq:modelo2d} contains a treatment term $-d_2(t)u_2$ that acts on one compartment only, which violates (H1): while the drug is present, fitness differences do shape the composition, and this is precisely the regime where therapy is designed to act. Treatments, however, are applied for a finite period. Suppose that all treatment-induced losses $d_i(t)\ge0$ are bounded and vanish for $t\ge t_F$. From $t_F$ on, the system is again of the form \eqref{eq:modelondMat}, with initial condition $u(t_F)$, and Theorem~\ref{thm:disc} applies to it (hypothesis (H4) concerns large times, so it suffices that it hold from some time after $t_F$). Therefore, $u(t)\to U^*\pi$, exactly as if there had been no treatment. For this reason, a transient treatment reshapes the transient, and with it the relapse dynamics, but it cannot change the asymptotic composition. Lasting effects require changing the system itself, i.e., the transition rates $k_{ij}$ (for instance by epigenetic reprogramming). As above, how long the system takes to return to $U^*\pi$ is set by the mixing time $1/\lambda$, which is what matters clinically, and the deeper the reduction, the larger the imprint that fitness differences can leave on the regrowing population.

\medskip
\noindent\textbf{How far can uniform competition be relaxed?} The proof of Theorem~\ref{thm:disc} relies on (H1), so it is natural to ask whether the result survives when competition is only \emph{weighted}. Consider the Lotka--Volterra model \eqref{eq:glv} with $c_{ij}=c_j$, i.e., the competitive pressure is the same for all recipients $i$, but depends on the phenotype $j$ of the emitter. Setting $w_j=c_j/K$ and $W=\sum_jw_ju_j$, all compartments still share a common brake, $g_i(u,t)=g(W)$ with $g(W)=1-W$, but it now depends on the weighted population $W$ rather than on the total $U$. Switching conserves $U$ but not $W$, because a cell that changes phenotype changes how much it contributes to competition, and the mechanism behind the theorem no longer applies. In fact, in Appendix~A we give a counterexample in which the population oscillates indefinitely along a stable limit cycle. Uniform competition therefore cannot be relaxed to weighted competition.

\medskip
\noindent\textbf{Relation to prior work.} The idea that transitions can dominate over fitness in structured populations is not new. It is present in the neutral theory of molecular evolution \cite{kimura1983}, whose neutrality appears here at saturation, and in the view of phenotypic switching as a bet-hedging strategy in fluctuating environments \cite{kussellleibler2005}. Theorem~\ref{thm:disc} gives a precise version of this intuition for reversible phenotypic switching, under a uniform modulation of net growth. Fitness returns when this structure is lost. If phenotypes differ in birth rates and share a density-dependent death rate, the equilibrium composition depends on transitions and births \cite{zhou2013,niu2015}. Quasispecies theory belongs to this regime, since mutation is coupled to replication and the equilibrium depends jointly on fitness and mutation \cite{eigen1971}, as in the ``survival of the flattest'' of mutationally robust genotypes \cite{wilke2001}. Fitness also returns when switching is coupled to division, as when epigenetic changes arise from errors in copying DNA methylation at replication \cite{ushijima2003,yang2016clock}, because transition rates then scale with the division rate. Finally, the classical theory of competitive Lotka--Volterra systems, including the carrying simplex \cite{zeeman1993,zeemanvandendriessche1998} and basins of attraction and resilience under bistability \cite{fassoni2019resilience}, describes the fitness-dependent behavior that appears once (H1)--(H4) are relaxed.

On the experimental side, Gupta et al.~\cite{gupta2011} demonstrated phenotypic equilibrium in cancer cell lines and explained it by a Markov model of stochastic state transitions. Subsequent models combined cell-state conversions with proliferation, identifying the equilibrium of a growing population and establishing its stability at the level of averages and of individual trajectories \cite{zhou2013,niu2015}, and phenotypic switching has been analyzed as a driver of non-genetic drug resistance \cite{gunnarsson2020}. Theorem~\ref{thm:disc} gives conditions (uniform competition) under which, in contrast, the equilibrium composition of proliferating and competing cells is the stationary distribution of switching alone.

The closest mathematical antecedent to Theorem~\ref{thm:disc} is the work of Giaimo et al.~\cite{giaimo2025}, with the same motivation: to explain why linear Markov models of phenotypic switching predict the observed frequencies of cancer cell types despite nonlinear growth. They study the same model and prove asymptotic equivalence with the linear system through Yakubovich's theorem on perturbed linear systems. Our proof is more elementary, using only the variation-of-constants formula and a basic property of Markov chains, and it extends to the continuum (Theorem~\ref{thm:cont}; \cite{companion2026}). The hypotheses and conclusions also differ. (i)~For the growth law, they require an auxiliary function $h$ satisfying a specific differential condition involving $g$; our proof does not use such a particular assumption. (ii)~They require the abundance-weighted mean growth rate to stay above some $r_{\min}>0$, a condition on the solution; we ask that one compartment proliferate persistently, a condition on the parameters. (iii)~We additionally require connected plasticity (H3), which identifies the limit: without it, asymptotic equivalence still holds, but the shadowing linear solution depends on the $r_i$. (iv)~Asymptotic equivalence yields no rate, as Giaimo et al.\ note; our proof provides estimates that bound the distance to $U^*\pi$ in terms of the mixing rate $\lambda$, so that slow switching means a long memory of fitness differences. Finally, Giaimo et al.\ remark that it is unclear how Lotka--Volterra-type interactions could be included while preserving asymptotic equivalence; the counterexample of Appendix~A shows that, for weighted competition, asymptotic equivalence can indeed fail.

\subsection{From discrete compartments to a continuous phenotypic landscape}
\label{sec:continuous}

We now restrict the general model to transitions between neighboring states and show how the resulting model can be viewed as a finite-difference approximation to a partial differential equation on a continuous phenotypic space (the transition from panel~a to panel~c of Figure~\ref{fig:roadmap}). 

\medskip
\noindent\textbf{Compartmental model with local transitions.}
The general model \eqref{eq:modelond} allows transitions between any pair of compartments, including jumps between ``distant'' states. For purely phenotypic or epigenetic changes, however, cells typically change their state gradually (jumps are reintroduced in Section~\ref{sec:nonlocal}). We therefore consider a chain of $n$ compartments in which cells can switch only to immediately adjacent states:
\begin{equation}\label{eq:modelondloc}
    u_i' = r_i(t) u_i g_i(u,t) + k_{i-1,i} u_{i-1} + k_{i+1,i} u_{i+1} - (k_{i,i-1} + k_{i,i+1}) u_i,
\end{equation}
with the natural no-flux boundary conditions $k_{0,1}=k_{1,0}=k_{n,n+1}=k_{n+1,n}=0$.

Since \eqref{eq:modelondloc} is a particular case of \eqref{eq:modelond}, Theorem~\ref{thm:disc} still applies (a chain with positive rates in both directions satisfies the connectivity hypothesis (H3)): under the uniform competition hypotheses the system is asymptotically equivalent to the linear, transition-only counterpart $u' = A u$. The transition matrix $A$ is now \emph{tridiagonal}, reflecting nearest-neighbor coupling. In probabilistic terms, this linear system is the master equation of a continuous-time random walk on the finite state space $\{1, \dots, n\}$ with nearest-neighbor jumps: a cell hops from $i$ to $i+1$ at rate $k_{i,i+1}$ and to $i-1$ at rate $k_{i,i-1}$.

\medskip
\noindent\textbf{Connection to a reaction--diffusion--advection PDE.}
A natural question is whether the chain \eqref{eq:modelondloc} approximates a continuous description. We answer it by taking the reverse route: starting from a continuous reaction--diffusion--advection equation, we discretize it and show that the result coincides with \eqref{eq:modelondloc}. This also reveals how the transition rates $k_{i,j}$ relate to the diffusion and advection coefficients.

Consider a continuous phenotype coordinate $x \in \mathbb{R}$ and a population density $u(x,t)$ that evolves according to the reaction--diffusion--advection equation
\begin{equation}\label{eq:modeloidloc}
    \partial_t u = r(x,t)\, u \, g(u, x, t)  - \partial_x\!\bigl(v(x,t)\,u\bigr) + \partial_x\!\bigl(D(x,t)\,\partial_x u\bigr),
\end{equation}
where $v(x,t)$ is the advection velocity, $D(x,t) \ge 0$ the diffusivity, and the reaction term $r(x,t)\,u\,g(u,x,t)$ encodes the vital dynamics.

As detailed in Appendix~B, discretizing \eqref{eq:modeloidloc} on a uniform grid with step-size $\Delta x$ via the method of lines produces an ODE system with exactly the same structure as \eqref{eq:modelondloc}. Equating the coefficients of \eqref{eq:modelondloc} and the discretization of \eqref{eq:modeloidloc} leads to the following relations between the transition rates $k_{i,j}$ and diffusion and advection coefficients at the interface points $x_{i+1/2}$ between $x_{i}$ and $x_{i+1}$:
\begin{equation}\label{eq:escolhaDv}
        D(x_{i+1/2}) \approx \frac{\Delta x^2}{2} \bigl( k_{i,i+1} + k_{i+1,i} \bigr), \quad \quad
        v(x_{i+1/2}) \approx \Delta x \bigl( k_{i,i+1} - k_{i+1,i} \bigr).
\end{equation}

These relations have a direct interpretation. At the interface between neighboring states, the \emph{diffusion coefficient} $D$ is proportional to the \emph{average} of the forward and backward transition rates. Diffusion thus reflects the unbiased, random component of phenotypic switching; even if cells have no preferred direction, they will spread along the phenotype axis due to stochastic fluctuations. On the other hand, the \emph{advection velocity} $v$ at the interface is proportional to the \emph{difference} between the forward and backward rates. A nonzero $v$ indicates a directional bias: cells tend to move toward one end of the phenotypic axis more than the other. This bias can be interpreted as a deterministic driving force, such as a differentiation pressure or a gradient of an epigenetic potential, an interpretation that we explore in Section~\ref{sec:gradient}.

\medskip
\noindent\textbf{Boundary conditions.}
Boundary conditions must be specified for the continuous model. If the phenotype domain is bounded, $\Omega = [x_{\min}, x_{\max}]$, the natural choice consistent with the no-flux boundary conditions of the discrete model is the zero-flux (Robin-type) condition,
\[
    \bigl( v u - D \partial_x u \bigr)\big|_{x = x_{\min}} = 0, \qquad
    \bigl( v u - D \partial_x u \bigr)\big|_{x = x_{\max}} = 0,
\]
which ensure that no cells enter or leave the system through the boundaries of the phenotype space. For an unbounded domain $\Omega = \mathbb{R}$, biologically reasonable solutions should satisfy $u(x,t) \to 0$ and $\partial_x u(x,t) \to 0$ as $|x| \to \infty$.

\medskip
\noindent\textbf{Matrix form: symmetric and skew-symmetric parts.}
The same identification can be read off the transition matrix. The tridiagonal matrix $A$ can be uniquely decomposed into its symmetric and skew-symmetric parts, $A = A_{\text{sym}} + A_{\text{skew}}$. Using the identifications \eqref{eq:escolhaDv}, we find that the symmetric part carries the diffusion and the skew-symmetric part the advection: their off-diagonal entries are the interface values of $D$ and $v$. The compartmental system can then be rewritten as (see Appendix~B for full details)
\begin{equation}\label{eq:matrixDecomp}
    u' = r(x,t)\, u \, g(u, x, t) + \frac{1}{\Delta x^2} \, \mathcal{D} \, u + \frac{1}{2\Delta x} \, \mathcal{V} \, u,
\end{equation}
where $\mathcal{D}=\Delta x^2A_{\text{sym}}$ and $\mathcal{V}=2\Delta x\,A_{\text{skew}}$ are matrix approximations of the diffusive and advective operators, respectively (when $v$ varies, the diagonal of the symmetric part also contains an advective term; Appendix~B gives the precise correspondence). A single matrix of transition rates thus contains two mechanisms: \emph{unbiased random motion}, in its symmetric (diffusive) part, and \emph{directed motion}, in its skew-symmetric (advective) part.

\medskip
\noindent\textbf{Relation to prior work.} The passage from discrete compartmental models to continuous reaction--diffusion--advection equations is standard. On the discrete side, nearest-neighbor transitions of the kind in \eqref{eq:modelondloc} appear directly in models of hierarchically organized tissues, such as the invasion of de-differentiating cancer cells across differentiation levels \cite{zhoutraulsen2019}; but coupling this discrete structure to phenotypic switching, with an explicit identification of the transport coefficients from the microscopic rates, is less common. Lorenzi et al.~\cite{villa2025} provide a tutorial on the resulting phenotype-structured PDEs (PS-PDEs), covering their derivation from individual-based models, the analysis of traveling waves and concentration phenomena, and numerical methods including the method-of-lines discretization we also employ; their derivation proceeds from a discrete stochastic model (a branching random walk on a lattice) via a Taylor expansion of the transition probabilities, a route parallel in spirit to the finite-difference identification we carry out in Appendix~B, though starting from individual cell rules rather than from a population-level compartmental model. Agostinelli et al.~\cite{agostinelli2026} develop an alternative, systematic derivation of continuum limits from discrete structured population models based on matched asymptotic expansions and multiple scales, which handles boundary layers and regions where a continuum description breaks down more rigorously than the direct Taylor-expansion route we use here. In a closely related direction, Alvarez et al.~\cite{clairambault2022} study integro-differential equations for populations endowed with plasticity of traits; they interpret the diffusion term in trait space as ``non-genetic instability," a continuous analogue of epigenetic mutations, and derive advection--diffusion equations as local approximations via a Kramers--Moyal expansion, an identification of $D(x)$ and $v(x)$ consistent with the one we obtain here from the symmetric and skew-symmetric parts of the transition matrix (equation~\eqref{eq:escolhaDv}). Phenotype-structured integro-differential and PDE models with non-genetic instability (random epimutations) and selection, applied to drug tolerance and relapse, were developed in \cite{chisholm2015,lorenzi2015,lorenzi2016}, within the structured-population framework of Perthame~\cite{perthame2007}; see \cite{clairambaultpouchol2019} for a survey of such models of drug resistance. Continuous models of phenotypic dynamics have also been applied directly to specific biological systems, notably hematopoiesis and leukemia: Cho et al.~\cite{cho2018} model acute myeloid leukemia as a continuum of differentiation states using diffusion maps constructed from single-cell data and infer the geometry of the phenotypic manifold from data, and Singh et al.~\cite{singh2025population} show, with a population-dynamics model of hematopoiesis informed by single-cell data, that myeloid bias involves both stem-cell differentiation and progenitor proliferation biases. Integro-differential Lotka--Volterra models with phenotype-dependent reproduction rates have also been analyzed together with optimal-control problems \cite{de2024analysis}.

\subsection{Fitness-independent asymptotic distributions in the continuum}
\label{sec:pdeasymp}

Since the discrete model is an approximation of the continuous one (Section~\ref{sec:continuous}), and fitness is asymptotically irrelevant in discrete models under uniform competition (Theorem~\ref{thm:disc}), it is natural to ask whether the same holds in the continuum: is the long-term phenotypic distribution still governed solely by the diffusion and advection terms, with the nonlinear vital dynamics becoming irrelevant? The answer is yes, under the continuum counterparts of (H1)--(H4).

\medskip
\noindent\textbf{Continuum hypotheses.}
Let the phenotype space be a bounded interval $\Omega=(x_{\min},x_{\max})$ with the no-flux boundary conditions of Section~\ref{sec:continuous}, and let $U(t)=\int_\Omega u(x,t)\,dx$ be the total population. We assume:
\begin{description}[leftmargin=2.2em, itemsep=3pt]
    \item[(H1$_{\text{c}}$) Common modulation.] $g(u,x,t)=g(U(t))$ for all $x\in\Omega$.
    \item[(H2$_{\text{c}}$) Stable carrying capacity.] $g$ is continuously differentiable and has a unique zero $U^*>0$, with $g>0$ for $0<U<U^*$ and $g<0$ for $U>U^*$.
    \item[(H3$_{\text{c}}$) Non-degenerate switching.] $D$ is twice and $v$ once continuously differentiable, and $D(x)>0$ on $[x_{\min},x_{\max}]$.
    \item[(H4$_{\text{c}}$) Bounded and persistent proliferation.] $0\le r(x,t)\le r_{\max}$, and from some time on $r(x,t)\ge r_{\min}>0$ on some sub-interval of phenotypes of positive length.
\end{description}
Hypothesis (H3$_{\text{c}}$) is the continuum counterpart of connected plasticity: diffusion acts everywhere, so every phenotype can be reached from every other.

\medskip
\noindent\textbf{Stationary distribution of the switching dynamics.}
In the discrete model the asymptotic composition is the stationary distribution $\pi$ of the Markov chain. Its continuum analogue is the steady solution of the purely advective--diffusive equation
\begin{equation}\label{eq:modeloidloclin}
    \partial_t u = -\partial_x(v u) + \partial_x(D\,\partial_x u).
\end{equation}
Setting $\partial_t u=0$ and integrating once gives $-vu+D\,\partial_x u=J$ (constant flux); the no-flux conditions force $J=0$, so $D\,\partial_x u=v\,u$, and
\begin{equation}\label{eq:stationary}
    u_{\text{eq}}(x) = U^*\,\psi(x), \qquad \psi(x) = \frac{1}{Z}\exp\!\Bigl( \int_{x_0}^x \frac{v(s)}{D(s)}\,ds \Bigr),
\end{equation}
where $Z$ normalizes $\psi$ to a probability density.

\begin{theorem}[informal]\label{thm:cont}
Assume (H1$_{\text{c}}$)--(H4$_{\text{c}}$). Then, for every nonnegative initial density with $U(0)>0$, the total population converges monotonically to $U^*$ and the density converges to $u_{\mathrm{eq}}=U^*\psi$ in $L^1(\Omega)$, whatever the proliferation rates $r(x,t)$; the full model \eqref{eq:modeloidloc} is asymptotically equivalent to the linear equation \eqref{eq:modeloidloclin}; and an estimate of the same form as \eqref{eq:star} holds, with the rate $\lambda$ given by the spectral gap of the switching dynamics.
\end{theorem}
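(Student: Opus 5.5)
The plan mirrors the discrete argument: separate the total mass from the shape, and treat the reaction term as a forcing of the linear switching semigroup. Write $\mathcal L u=-\partial_x(vu)+\partial_x(D\partial_x u)$ with the no-flux conditions. Under (H3$_{\text c}$), $\mathcal L$ generates a positive, mass-conserving semigroup $T(t)=e^{t\mathcal L}$ on $L^1(\Omega)$. Its unique normalized steady state is $\psi$ from \eqref{eq:stationary}, which is bounded above and below by positive constants.

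\textbf{Mixing estimate.} The continuum substitute for Lemma~\ref{lem:mix} is the bound
\[
\|T(t)f-(\textstyle\int f)\psi\|_{L^1}\le Ce^{-\lambda t}\|f\|_{L^1}.
\]
I would obtain it by symmetrizing $\mathcal L$ in the weighted space $L^2(\psi^{-1}dx)$. There $\mathcal L$ becomes self-adjoint with compact resolvent, since it is a Sturm--Liouville operator on a bounded interval with $D>0$. Its spectrum is therefore $0>-\lambda\ge\dots$, and $\lambda$ is the spectral gap. To pass from $L^2$ to $L^1$, I would use the smoothing of $T(t)$ over a unit time interval: $T(1)$ maps $L^1$ boundedly into $L^\infty$, by a Gaussian-type heat-kernel bound or simply because the kernel of $T(1)$ is continuous. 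This $L^1$ bound is the step I expect to be the main technical obstacle, since it is where regularity and the boundary conditions actually enter. A cruder Doeblin-type lower bound on the kernel of $T(1)$ by $c\,\psi$ would also give an $L^1$ contraction, with an implicit rate.

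\textbf{Total mass.} Integrating \eqref{eq:modeloidloc} gives $U'=g(U)\int_\Omega r u\,dx$, since the flux terms vanish by the boundary conditions. Positivity is preserved, so $\int ru\ge0$. Hence $U$ is monotone and stays on the same side of $U^*$, because $U^*$ is invariant. For convergence $U\to U^*$, I need $\int ru\ge c\,U$ for large $t$. I would get this from the mixing: the vital term is bounded by $r_{\max}|g(U)|\,u$, with $U$ bounded, so $u(t)\ge e^{-M}T(1)u(t-1)$ by comparison. Together with the kernel bound, this gives $u(t)\ge c\,U(t-1)\psi$ on $\Omega$. This lower bound yields $\int r u\ge r_{\min}c|I|U$ on the sub-interval $I$ of (H4$_{\text c}$). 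Then $U'\ge \kappa\,g(U)U$, $\kappa>0$, when $U<U^*$, with the symmetric inequality when $U>U^*$, forces $U\to U^*$. I would further show that $\int_0^\infty |g(U)|\,dt<\infty$. This follows from $\int |U'|<\infty$ by monotonicity, combined with $\int ru\ge \kappa U$, which gives $\int|g|\le \kappa^{-1}\int |U'|/U$.

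\textbf{Shape.} Apply variation of constants, $u(t)=T(t)u_0+\int_0^t T(t-s)\bigl(g(U)ru\bigr)(s)\,ds$, and subtract the projection onto $\psi$. This gives
\[
\|u(t)-U(t)\psi\|_{L^1}\le Ce^{-\lambda t}\|u_0\|+C\int_0^t e^{-\lambda(t-s)}r_{\max}|g(U(s))|U(s)\,ds,
\]
which is the analogue of \eqref{eq:star}. The convolution term tends to zero because $|g(U)|\to0$ and $\int|g|<\infty$. Hence $u\to U^*\psi$ in $L^1$, independently of $r$. The bound also shows explicitly that selection is forgotten at rate $\lambda$. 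Finally, asymptotic equivalence follows from the triangle inequality: every solution $u^*$ of \eqref{eq:modeloidloclin} with mass $U^*$ satisfies $u^*\to U^*\psi$ by the same mixing estimate.
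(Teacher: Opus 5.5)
Your proof is correct, but it runs the argument of Appendix~A in the opposite order. The paper defers the continuum proof to its companion work and says that proof follows Appendix~A step by step.

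\textbf{The paper's order.} It first shows that the shape deviation $u(t)-U(t)\psi$ vanishes \emph{before} knowing the limit of $U$. Under (H1$_{\text{c}}$) the reaction term $F=g(U)\,r\,u$ has one sign, so $\|F(s)\|_{L^1}=|g(U)|\int_\Omega ru\,dx=|U'(s)|$ exactly. This is integrable by monotonicity, so the convolution in the variation-of-constants bound tends to zero with no information on how fast $g(U)\to0$. Only afterwards does it obtain $U\to U^*$, by contradiction: $L^1$ closeness to $U\psi$ keeps $\int_I u$ bounded away from zero, so $\int_0^\infty\!\int_\Omega ru\,dx\,dt=\infty$, which is incompatible with $U_\infty<U^*$.

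\textbf{Your order.} You prove $U\to U^*$ first, from the pointwise bound $u(t)\ge e^{-M}T(1)u(t-1)\ge c\,U(t-1)\psi$. You then get the shape from $g(U)\to0$, using the cruder forcing bound $r_{\max}|g(U)|U$. This step needs a uniform \emph{lower} bound $c\psi$ on the kernel of $T(1)$, not just the upper bound you use to pass from the weighted $L^2$ gap to $L^1$. That lower bound is the continuum analogue of Lemma~\ref{lem:mix}(ii) and holds via the strong maximum principle and Hopf's lemma at the no-flux boundary. The paper's ordering needs only the $L^1$ mixing estimate and $\int_I\psi>0$. If you use the identity $\|F\|_{L^1}=|U'|$ instead, you get \eqref{eq:star} verbatim, with the $r$-independent bound $\int_0^\infty|U'|\,dt=|U^*-U(0)|$ on total selection, and the comparison/Doeblin step becomes unnecessary.

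\textbf{What each route buys.} Your route gives $\int ru\ge\kappa U$ from the explicit time $t_r+1$ on, with explicit constants, and hence an explicit bound on $\int_0^\infty|g(U)|\,dt$. In the paper, the analogous bound $\rho\ge c_0$ holds only after an implicit time at which $\|u-U\psi\|_{L^1}$ has become small. Conversely, the paper's ordering is the one that extends to the unbounded domains treated in the companion work. There a uniform minorization fails: for the Ornstein--Uhlenbeck process the kernel of $T(1)$ is not bounded below uniformly in the starting point. Finally, your symmetrization $\mathcal L u=\partial_x\bigl(D\psi\,\partial_x(u/\psi)\bigr)$ in $L^2(\psi^{-1}dx)$ is what delivers the rate as the spectral gap, as the statement requires. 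It relies on the one-dimensional zero-flux structure, which is available here.
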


The proof is beyond the scope of this work and is given in \cite{companion2026}. It generalizes the proof of Theorem~\ref{thm:disc} in Appendix~A step by step. The same reference extends the result to the whole real line, under two natural conditions (the landscape confines the population, and cells cannot escape to infinitely distant phenotypes), which cover the Ornstein--Uhlenbeck, multi-well and heavy-tailed examples of Section~\ref{sec:examples}. It also covers bounded phenotype domains in higher dimension, with drifts that need not be gradients, and switching by jumps, alone or combined with drift and diffusion as in the unified model of Section~\ref{sec:nonlocal}.

Figure~\ref{fig:imprint} illustrates Theorem~\ref{thm:cont} and isolates the imprint of fitness. Anticipating Section~\ref{sec:gradient}, we take the velocity to be the negative gradient of a potential, $v=-P'$, whose wells confine the population; here $P$ is a tilted double well (Figure~\ref{fig:imprint}a), and proliferation strongly favors the right well (Figure~\ref{fig:imprint}b). The population starts at the stationary composition, $u(x,0)=0.01\,\psi(x)$, so that every deviation from $\psi$ is caused by fitness differences. During the growth phase, the composition is pulled towards the fast-proliferating well; once the population has saturated, switching brings it back to $\psi$ (Figure~\ref{fig:imprint}c), and the imprint of fitness decays at the rate $\lambda_1$ given by the spectral gap of the switching dynamics (Figure~\ref{fig:imprint}e). The mixing time $1/\lambda_1$ grows exponentially with the ratio between barrier height and noise intensity (Figure~\ref{fig:imprint}f): in this example it ranges from about one to about $900$ time units, while growth lasts about five.

\begin{figure}[htbp]
\centering
\includegraphics[width=0.98\textwidth]{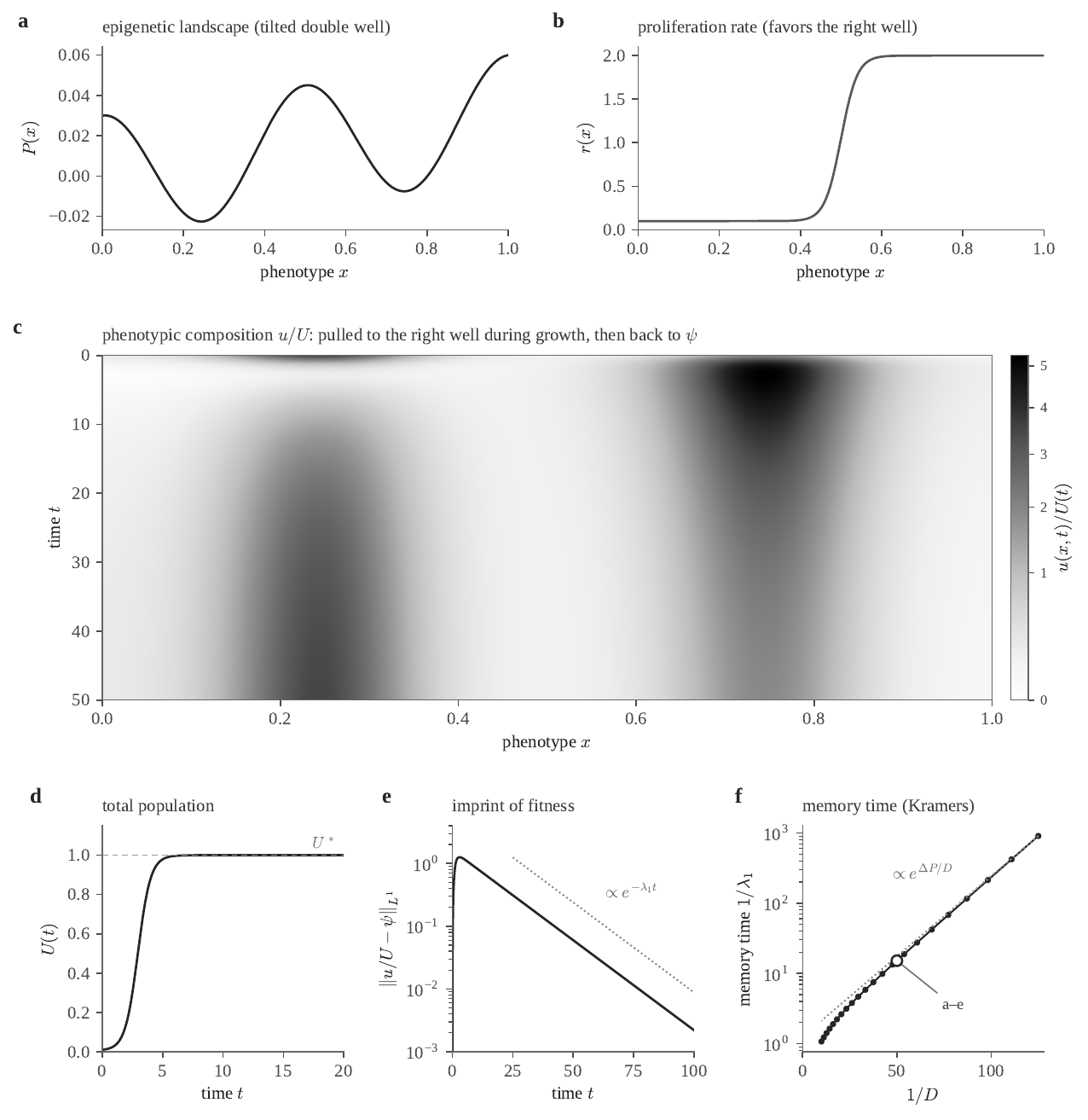}
\caption{\footnotesize\linespread{1}\selectfont\textbf{Illustration of Theorem~\ref{thm:cont}: uniform competition erases the imprint of fitness.} The full reaction--advection--diffusion equation \eqref{eq:modeloidloc} with $g(U)=1-U$ and $v=-P'$ on $(0,1)$, with zero-flux boundaries and constant diffusivity $D=0.02$. \textbf{(a)} The potential, a tilted double well, $P(x)=0.03\cos(4\pi x)+0.03\,x$; the stationary density $\psi\propto e^{-P/D}$ places $68\%$ of the population in the left well. \textbf{(b)} The proliferation rate, $r(x)=0.1+1.9/\bigl(1+e^{-(x-1/2)/0.02}\bigr)$, which strongly favors the right well. \textbf{(c)} Kymograph of the composition $u/U$ (color scale nonlinear, power law with exponent $0.6$; time increases downward). The population starts at the stationary composition, $u(x,0)=0.01\,\psi(x)$; during growth it is pulled towards the right well, and afterwards switching brings it back to $\psi$. \textbf{(d)} Total population $U(t)$, which reaches the carrying capacity $U^*=1$ by $t\approx5$. \textbf{(e)} $L^1$ distance of the composition to $\psi$: it peaks at $1.26$ during growth and then decays at the spectral gap $\lambda_1\approx0.066$ of the switching dynamics (dotted: slope $-\lambda_1$). \textbf{(f)} Memory time $1/\lambda_1$ as a function of $1/D$, for the same potential: it grows like $e^{\Delta P/D}$ (dotted), where $\Delta P\approx0.053$ is the barrier seen from the shallower well (the Kramers regime); the circle marks $D=0.02$, used in (a)--(e). Finite-volume discretization with $N=400$ cells whose fluxes preserve $\psi$ exactly, integrated in time with a stiff (BDF) solver.}
\label{fig:imprint}
\end{figure}

The biological message is the one of Section~\ref{sec:uniform_competition}: \emph{under uniform competition, the long-term phenotypic distribution is independent of the proliferation rates $r(x,t)$ and is determined by the ratio $v/D$ alone}. A phenotype with high fitness but strong outward transitions does not dominate, and a low-fitness phenotype can persist if inward fluxes are large enough. The same qualifications also carry over: the statement is asymptotic, and the relevant time scale is the inverse spectral gap of the switching dynamics, which, for a landscape with deep wells separated by high barriers, can be exponentially long in the barrier height (the Kramers regime; Figure~\ref{fig:imprint}f). The shape of the stationary distribution (uniform, exponential, Gaussian, or heavy-tailed) emerges from the balance between deterministic drift and random diffusion, as we explore in Section~\ref{sec:examples}, and a second numerical illustration of Theorem~\ref{thm:cont}, on the multi-well landscape of Figure~\ref{fig:roadmap}e,f, is given in Section~\ref{sec:gradient} (Figure~\ref{fig:pdesim}).

\medskip
\noindent\textbf{Relation to prior work.} Theorem~\ref{thm:cont} connects the continuum limit of Section~\ref{sec:continuous} with the asymptotic equivalence result of Section~\ref{sec:uniform_competition}, and so links the compartmental and the PS-PDE literatures. The general framework of Lorenzi et al.~\cite{villa2025} and the integro-differential formulations of \cite{clairambault2022,chisholm2015,lorenzi2016} retain the nonlinear reaction term throughout, and we have not found the question of its asymptotic negligibility formulated there. One reason may be that many phenotype-structured models of adaptive dynamics use an additive form of competition, $r(x)-\kappa U$, rather than the multiplicative form $r(x)\,g(U)$ \cite{perthame2007}. With additive competition, selection acts at full strength for all times, the asymptotic phenotypic distribution depends on $r(x)$, and fitness is not forgotten, as illustrated numerically in \cite{companion2026}. In the limit of small phenotypic changes, such populations concentrate on the fittest traits \cite{perthame2007,lorz2011}, and the balance between epimutations and selection has been analyzed in fluctuating environments \cite{lorenzi2015}. Which form of density dependence applies to a given tumor therefore matters: it decides whether the long-term phenotypic distribution is set by plasticity alone or by plasticity and fitness jointly.

\subsection{The individual cell perspective: Fokker--Planck equation and connection to stochastic processes}
\label{sec:stochastic}

The linear advection--diffusion equation \eqref{eq:modeloidloclin} that governs the asymptotic population density under uniform competition,
\begin{equation}
\label{eq:eqlinssprob}
    \partial_t u = -\partial_x\bigl(v(x) u\bigr) + \partial_x\bigl(D(x) \partial_x u\bigr),
\end{equation}
has a well-known probabilistic interpretation. It is also the \emph{Fokker--Planck equation} (also called the forward Kolmogorov equation) for a continuous-time stochastic process describing the trajectory of a single cell in the phenotype space (Figure~\ref{fig:roadmap}e) \cite{risken1996}. We now make this connection explicit.

\medskip
\noindent\textbf{Langevin and Fokker--Planck equations.}
Consider a stochastic differential equation, or \emph{Langevin equation},
\begin{equation}\label{eq:general_sde}
    dX_t = a(X_t)\,dt + b(X_t)\,dW_t,
\end{equation}
for the position $X_t$ of a particle driven by a drift $a(x)$ and by a standard Brownian motion $W_t$ with noise amplitude $b(x)$; we interpret it, and every SDE in this paper, in the It\^o sense \cite{oksendal2003stochastic}. The probability density $p(x,t)$ of $X_t$, or, equivalently, the density of a cloud of independent particles each following \eqref{eq:general_sde}, obeys the associated Fokker--Planck equation
\begin{equation}\label{eq:general_FP_classical}
    \partial_t p = -\partial_x\bigl[ a\, p \bigr] + \tfrac{1}{2}\,\partial_x^2\bigl[ b^2\, p \bigr],
\end{equation}
which, written as a conservation law with explicit advective and diffusive fluxes, reads
\begin{equation}\label{eq:general_FP}
    \partial_t p = -\partial_x\Bigl[ \bigl( a - \tfrac{1}{2}(b^2)' \bigr)\, p \Bigr] + \partial_x\Bigl[ \tfrac{1}{2} b^2\, \partial_x p \Bigr].
\end{equation}

\medskip
\noindent\textbf{From population density to individual trajectories.}
Reading the advection--diffusion equation \eqref{eq:eqlinssprob} as a Fokker--Planck equation in the form \eqref{eq:general_FP}, matching the diffusive fluxes gives $b(x) = \sqrt{2D(x)}$, and matching the advective fluxes gives the drift $a(x) = v(x) + D'(x)$. Therefore, the macroscopic description, itself the continuum version of the compartmental model, is also the Fokker--Planck description of the following Langevin equation, in It\^o form, describing the individual cell trajectory $X_t$ along the phenotype space:
\begin{equation}\label{eq:SDEito}
    dX_t = \bigl( v(X_t) + D'(X_t) \bigr) \, dt + \sqrt{2 D(X_t)} \, dW_t.
\end{equation}
The macroscopic distribution of the population is thus the probability density of an ensemble of identical, independently moving cells, a duality between the population-level PDE and the individual-level SDE that is central to the mechanistic interpretation of the model. The extra term $D'$, which vanishes when $D$ is constant, is required by the It\^o convention for the Fokker--Planck equation of \eqref{eq:SDEito} to be the flux-form equation \eqref{eq:eqlinssprob}; its effect is discussed in Section~\ref{sec:gradient}. For the same Fokker--Planck equation, the drift would be $v+D'/2$ in the Stratonovich convention and $v$ in the anti-It\^o (H\"anggi--Klimontovich) convention; the convention only determines how the SDE is written, once the Fokker--Planck equation is fixed \cite{lau2007}.

\medskip
\noindent\textbf{The Ornstein--Uhlenbeck process: return to the mean with noise.}
An important special case arises when the diffusivity is constant, $D(x) \equiv D$, and the velocity is linear and restoring,
\begin{equation}\label{eq:OUvelocity}
    v(x) = -\theta (x - \mu),
\end{equation}
with $\theta > 0$ and a reference phenotype $\mu$. The SDE \eqref{eq:SDEito} then becomes
\begin{equation}\label{eq:OUSDE}
    dX_t = -\theta (X_t - \mu) \, dt + \sqrt{2D} \, dW_t,
\end{equation}
which is the classical \emph{Ornstein--Uhlenbeck} (OU) process. The drift term $-\theta (X_t - \mu)$ represents a restoring force that pulls the phenotype back toward a target phenotype $\mu$, a valley of the Waddington landscape: the farther $X_t$ is from $\mu$, the stronger the drift. The constant noise intensity $\sqrt{2D}$ adds persistent random fluctuations. Its stationary distribution is the Gaussian discussed in Section~\ref{sec:examples} (Case~3).

\medskip

\noindent\textbf{Relation to prior work.} The use of the Ornstein--Uhlenbeck (OU) process to model phenotypic evolution has a long history in quantitative genetics and biophysics. De Souza Silva et al.~\cite{desouzasilva2023} analyze phenotypic evolution as an OU process under environmental variation and phenotypic plasticity: under a fixed optimum the stationary distribution is Gaussian, as in Case~3 of Section~\ref{sec:examples}, but a moving optimum combined with plasticity deforms the effective potential into skewed, non-Gaussian distributions, with an upper bound on the rate of environmental change beyond which the population cannot persist. In the cancer context, Kessler and Levine~\cite{kessler2022} model a ``chance to persist" (CTP) phenotype with two coupled variables: a mutation--selection equation for the population-level CTP density, whose steady state is an Airy function rather than a Gaussian (from a reflecting boundary near the origin), and a genuine OU process with an absorbing boundary for an individual-level survival trait, from which they derive a closed-form persistence probability under therapy. A recent extension couples this picture to a drug-concentration-dependent reaction--diffusion--advection equation over a two-dimensional epigenetic--phenotypic space and predicts an optimal drug concentration and drug-holiday schedule \cite{parklevine2025}, the kind of concrete, testable prediction a mechanistic account of $D(x)$ and $v(x)$ should also be able to generate. In contrast to these phenomenological OU constructions, here the drift and diffusion coefficients are expressed through microscopic switching rates (Eq.~\eqref{eq:escolhaDv}) rather than posited directly, and the OU process appears as one member of a broader family of potentials (Section~\ref{sec:examples}).

\subsection{The velocity field as a gradient: effective potentials and stochastic gradient descent on the phenotypic landscape}
\label{sec:gradient}

So far, the advection velocity $v(x)$ has been identified with the net difference between the transition rates connecting neighboring compartments. From the perspective of the PDE itself, however, $v(x)$ is a phenomenological input: it is postulated \emph{a priori} as a velocity field that drives the motion, but, unlike diffusion, which emerges naturally from random switching between states, it is not derived from an underlying mechanism. We now argue that this velocity field admits a mechanistic reading, grounded in the tendency of cells to settle into stable configurations: $v(x)$ can be interpreted as the downhill direction of an epigenetic potential whose valleys are the stable phenotypes.

\medskip
\noindent\textbf{Stability and effective potentials in epigenetic regulation.}
Among the molecular mechanisms that drive epigenetic change, histone modification is one of the best studied \cite{bird2007,kouzarides2007,allisjenuwein2016}. Histones are the proteins around which DNA is wrapped, and their configuration determines which genes are accessible for transcription in a given cell. This configuration is, in turn, shaped by molecular forces that fold histones into stable conformations; chemical modifications such as acetylation and methylation alter these forces, driving histones toward new stable conformations and thereby changing which regions of DNA are exposed or occluded. At the level of single molecules and assemblies, stable conformations are minima of a free energy, as in the folding of proteins \cite{dillchan1997} and the assembly of nuclear compartments \cite{banani2017}. At the level of gene-expression programs, however, the relevant dynamics is driven out of equilibrium: histone marks are written and erased by enzymes that consume energy, and a stable chromatin state is a self-sustaining steady state of these reactions, maintained by the recruitment of modifying enzymes by existing marks, rather than a minimum of a thermodynamic free energy \cite{dodd2007,michieletto2016}. What survives at this level is the landscape in the sense of an effective potential, or quasi-potential, of the stochastic dynamics \cite{wang2015,zhou2016}: its local minima are metastable gene-expression programs, separated by barriers set by the cost of rewriting the histone code \cite{alarcon2026}. A cell type thus corresponds to a local minimum of this effective potential, and transitions between phenotypes are barrier-crossing events, driven by molecular noise and enzymatic activity. This is the sense in which we give Waddington's epigenetic landscape a physical reading.

\medskip
\noindent\textbf{From landscapes to velocity fields and stationary distributions.}
This landscape can be expressed in mathematical form as a function of the phenotype coordinate $x$ if we assume that the deterministic force $v(x)$ that drives a cell toward stable states is the negative gradient of an epigenetic potential $P(x)$,
\begin{equation}\label{eq:gradient}
    v(x) = -P'(x).
\end{equation}
Mathematically, any smooth velocity field can be written this way in one dimension ($P(x)=-\int_{x_0}^x v(s)ds$), but the physical hypothesis is that $P(x)$ describes a biologically meaningful epigenetic landscape (Figure~\ref{fig:roadmap}c,e).

Inserting $v=-P'$ into the continuous model \eqref{eq:modeloidloc} yields, under uniform competition (Theorem~\ref{thm:cont}) and in a tissue in homeostasis ($U= U^*$ and $g= 0$), the asymptotic equation
\begin{equation}\label{eq:FPgrad}
    \partial_t u = \partial_x\!\bigl(P'(x)\,u\bigr) + \partial_x\!\bigl(D(x)\,\partial_x u\bigr).
\end{equation}
Its stationary solution follows directly from \eqref{eq:stationary}:
\begin{equation}\label{eq:stationarypot}
    u_{\mathrm{eq}}(x) = C \exp\!\Bigl( -\int_{x_0}^x \frac{P'(s)}{D(s)}\,ds \Bigr).
\end{equation}
For constant diffusivity $D(x)\equiv D$, this simplifies to the Boltzmann distribution
\[
    u_{\mathrm{eq}}(x) = C \exp\!\bigl(-P(x)/D\bigr),
\]
where $D$ plays the role of an effective temperature: high noise flattens the distribution and facilitates transitions between minima.

When $D(x)$ varies, the stationary distribution can still be written in a Boltzmann form $$u_{\mathrm{eq}}(x) = C \exp\!\bigl(-\Phi(x)\bigr),$$ where the effective potential $\Phi(x)$ satisfies $\Phi'(x)=P'(x)/D(x)$. With state-dependent noise, the model can describe, for instance, faster exploration of the noisier regions of the landscape (Section~\ref{sec:examples}).

In more than one dimension, the gradient form \eqref{eq:gradient} is a genuine restriction: nonequilibrium drifts generically include a rotational part that breaks detailed balance and sustains circulating probability fluxes \cite{wang2015}, and the stationary density is then no longer determined by $P$ alone. The irrelevance of fitness does not depend on this assumption, since Theorem~\ref{thm:disc} holds for any irreducible transition matrix, reversible or not, and its continuum counterpart holds on bounded phenotype domains in any dimension, with drifts that need not be gradients \cite{companion2026}: the limit is then the stationary density of the switching dynamics, which is not of Boltzmann form and carries a circulating probability flux. Only the Boltzmann-type form of the limit depends on the gradient assumption. Landscape models of the epithelial--mesenchymal transition, in which epithelial, mesenchymal and hybrid states appear as attractors of a core regulatory circuit, are the most developed application of this kind in cancer \cite{lu2013emt,li2016emt}.

\medskip
\noindent\textbf{A numerical illustration.}
Figure~\ref{fig:pdesim} illustrates Theorem~\ref{thm:cont} on the multi-well landscape sketched in Figure~\ref{fig:roadmap}e,f. It shows the numerical solution of the full nonlinear reaction--advection--diffusion equation \eqref{eq:modeloidloc}, obtained with the method of lines. Cells are seeded, via a narrow Gaussian, at the left end of the domain, and proliferate at a rate $r(x)=r_0-mx$ that is highest where the population starts, favoring states near the initial, shallow minimum; reflecting (zero-flux) boundaries are imposed at both ends. The population grows rapidly to carrying capacity while beginning to redistribute across the landscape (Figure~\ref{fig:pdesim}c), and its profile converges to the Boltzmann-type stationary density~\eqref{eq:stationarypot} of the purely linear equation, with no visible trace of the fitness gradient that shaped the transient (Figure~\ref{fig:pdesim}e). Here the barriers are moderate, so the imprint of fitness fades quickly once the population has saturated; with higher barriers the same convergence would take much longer, as discussed in Section~\ref{sec:pdeasymp}. Because the population is seeded where proliferation is fastest, this simulation shows convergence from a distant initial condition rather than isolating the imprint of fitness, which is the purpose of Figure~\ref{fig:imprint}; the landscape is the one of Figure~\ref{fig:roadmap}e,f, and it allows a direct comparison with the single-cell simulation of Figure~\ref{fig:examples}e.

\begin{figure}[htbp]
\centering
\includegraphics[width=0.98\textwidth]{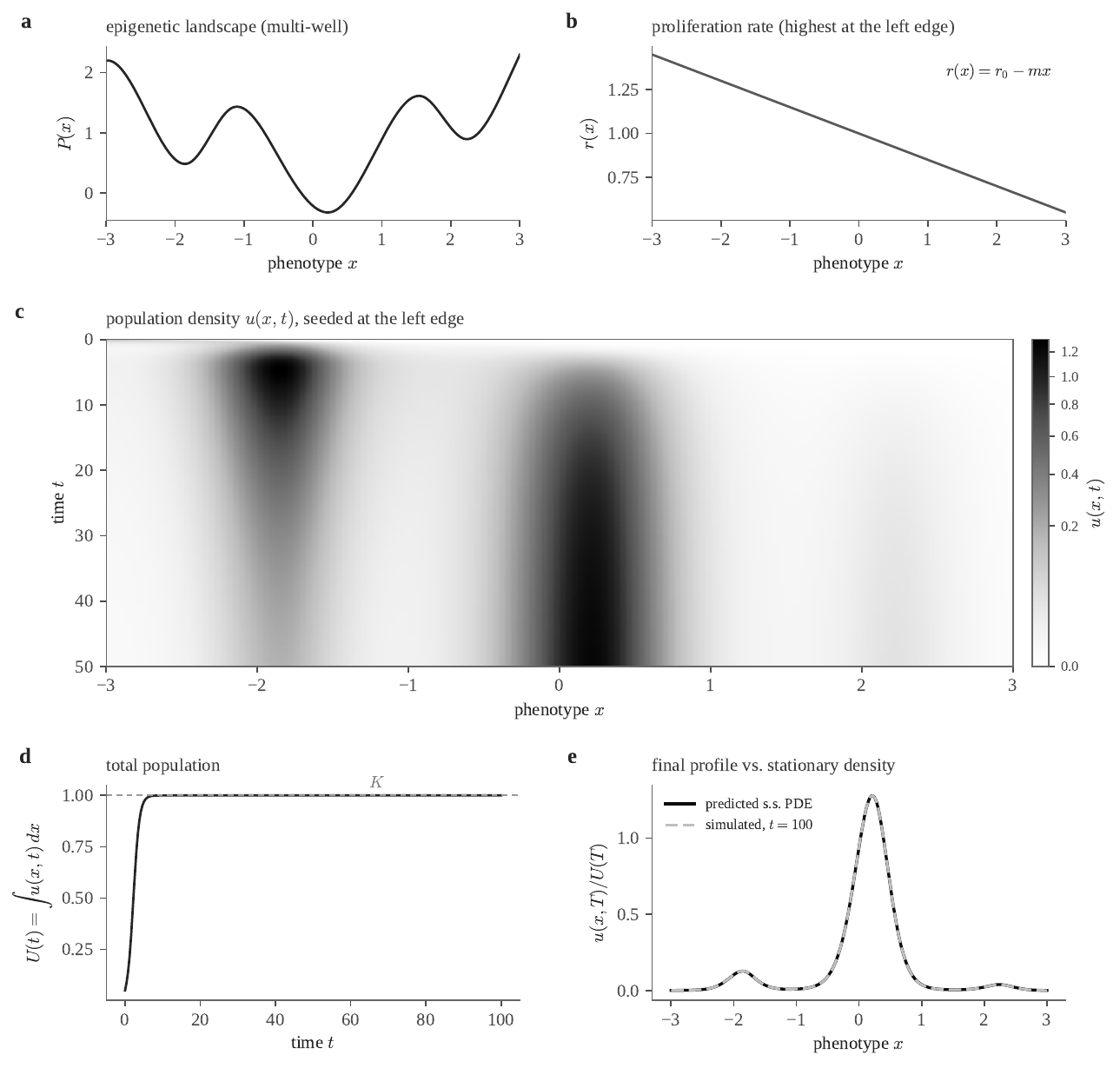}
\caption{\footnotesize\linespread{1}\selectfont\textbf{Illustration of Theorem~\ref{thm:cont}: the full nonlinear PDE converges to the stationary density of the switching dynamics despite a spatially decreasing fitness.} The full reaction--advection--diffusion equation \eqref{eq:modeloidloc} is solved directly by the method of lines. Space is discretized into $N=150$ compartments, using the same finite-difference identification of $D$ and $v$ from local transition rates as in Section~\ref{sec:continuous}; this turns the PDE into exactly the type of large compartmental ODE system covered by Theorem~\ref{thm:disc}, which is then integrated in time with a stiff ODE solver. \textbf{(a)} The potential $P(x)$, the same multi-well landscape as Figure~\ref{fig:roadmap}e,f (a cubic spline through fixed control points; see Data availability), with constant diffusivity $D=0.35$. \textbf{(b)} The proliferation rate $r(x) = r_0 - mx$ ($r_0=1$, $m=0.15$), decreasing across phenotype space; biologically, this represents a population in which the least differentiated phenotype (left edge) is also the most proliferative, as for stem and progenitor cells relative to their post-mitotic descendants. \textbf{(c)} Kymograph of $u(x,t)$ (color scale nonlinear, power-law with $\gamma=0.45$, to make the shallow right-hand well visible; time increases downward, as in Figure~\ref{fig:examples}). The population is seeded, at $t=0$, as a narrow Gaussian at the left boundary ($x_0=-3$, $\sigma=0.25$), with total initial mass $U(0)=0.05$, and reflecting (zero-flux) boundaries are imposed at $x=\pm3$. \textbf{(d)} Total population $U(t)$, growing rapidly to the carrying capacity $K=1$ under the logistic modulation $g(U)=1-U/K$; note that essentially all growth, and hence all activity of the reaction term $r(x)u\,g(U)$, occurs within the first $t\approx10$ time units. \textbf{(e)} The simulated profile at $t=100$, normalized to a probability density, against the stationary density $u_{\text{eq}}(x)\propto\exp(-P(x)/D)$ predicted by the purely linear equation \eqref{eq:FPgrad} (no reaction term): the two are visually indistinguishable ($L^1$ distance $2.6\times10^{-3}$), with no trace of the fitness advantage that favored the starting region throughout the transient.}
\label{fig:pdesim}
\end{figure}

\medskip
\noindent\textbf{Stochastic gradient descent on the epigenetic landscape.}
Under uniform competition, the long-term behavior of the cell density is governed by the linear PDE \eqref{eq:FPgrad} (Theorem~\ref{thm:cont}), the Fokker--Planck equation of the single-cell SDE \eqref{eq:SDEito}. We now read this SDE under the gradient hypothesis $v(x) = -P'(x)$. With this substitution, the SDE for the individual cell trajectory, equation \eqref{eq:SDEito}, becomes
\begin{equation}\label{eq:SDEgradient}
    dX_t = \bigl( -P'(X_t) + D'(X_t) \bigr) \, dt + \sqrt{2D(X_t)} \, dW_t.
\end{equation}
This SDE can be read as a \emph{stochastic gradient descent} (SGD) on the epigenetic landscape: the drift derives from the minimization of the potential $P$, while the noise stems from diffusion, i.e., from the random motion of cells between neighboring compartments. Each cell is driven in the direction of the negative gradient of $P$, corrected by the term $D'$ when noise is not uniform (see below), seeking a stable configuration; without noise, it would deterministically roll downhill to the nearest minimum, corresponding to a stable phenotype. Intrinsic noise and variability perturb this motion, however, so that convergence to an exact minimum is no longer guaranteed: fluctuations can carry a cell over the barriers separating minima, allowing it to explore the landscape and transition between basins of attraction. The expression \emph{stochastic gradient descent} is borrowed from optimization and machine learning \cite{robbinsmonro1951,bottoucurtisnocedal2018}. We use it in the Langevin sense, gradient descent with an explicit noise term, and not as a claim of equivalence with the training algorithms of machine learning, whose randomness comes from subsampling the data; we return to this connection in the Conclusion.

\medskip
\noindent\textbf{State-dependent noise and the effective potential.}
The drift in \eqref{eq:SDEgradient}, written in the It\^o convention, is not the bare $-P'$: it contains the term $D'(X_t)$, whose effect depends on whether noise is uniform across the landscape. When $D(x)\equiv D$ is constant, $D'(X_t)$ vanishes identically, and noise acts purely as an effective temperature that agitates trajectories without biasing them; the resulting stationary density is then a simple rescaling of the underlying potential itself, the Boltzmann form $u_{\mathrm{eq}}(x) = C\exp(-P(x)/D)$ derived above. When $D(x)$ varies in phenotype space, the term $D'$ pushes cells toward regions of \emph{higher} noise intensity. Its net effect is captured by the effective potential $\Phi(x)$, with $\Phi'=P'/D$, whose Boltzmann density $e^{-\Phi}$ is flatter where $D$ is large: at equal slope of $P$, barriers in noisier regions are effectively lower and cells explore them more freely. This is the sense in which noise reshapes the effective landscape (Section~\ref{sec:examples}).

The gradient and the noise thus play opposite roles: the first keeps cells in their phenotypes, the second lets them change.

\medskip
\noindent\textbf{Inferring the landscape from data.}
The gradient hypothesis also suggests measuring the landscape rather than postulating $v$ and $D$. Single-cell technologies (scRNA-seq, ATAC-seq, ChIP-seq) provide snapshots of the distribution of cells across a phenotypic manifold and, if the population is near steady state, \eqref{eq:stationarypot} can be inverted: $P'/D=-\partial_x\ln u_{\mathrm{eq}}$, that is, $P=-D\ln u_{\mathrm{eq}}$ up to a constant when $D$ is constant. The steady-state assumption matters: out of saturation, or under non-uniform competition, a snapshot mixes proliferation with transport \cite{weinreb2018fundamental}, whereas in a saturated population under uniform competition it approaches the stationary density of the switching dynamics alone after a time of order $1/\lambda$ (Theorems~\ref{thm:disc} and~\ref{thm:cont}). Even then, a static snapshot determines only the effective potential $\Phi$, not $P$ and $D$ separately; separating them requires dynamic information, such as RNA velocity \cite{lamanno2018}, lineage tracing or time-resolved data \cite{weinreb2018fundamental,schiebinger2019}. Methods that learn a potential-driven drift from time series of single-cell snapshots \cite{hashimoto2016,yeo2021}, together with results on when such processes can be recovered from their temporal marginals \cite{lavenant2024}, are suited to this task. We return to this identifiability problem, and to its consequences for cancer, in Section~\ref{sec:examples}.

\medskip
\noindent\textbf{Relation to prior work.} Waddington's metaphor has been formalized in many ways. In the physics literature, cell fates were identified with high-dimensional attractor states of gene-regulatory networks \cite{huang2005}, and cancer with abnormal attractors of the same networks \cite{huang2009}; fate decisions were related to bifurcations of the landscape \cite{ferrell2012} and to transition states between its valleys \cite{moris2016}, and low-dimensional geometric landscapes were fitted to developmental data \cite{corson2012,rand2021,saez2022}; the existence of a global potential for stochastic systems lacking detailed balance was established \cite{ao2007}; landscapes were constructed from steady-state distributions \cite{wang2011}, extended to multi-step differentiation \cite{qiu2012} and to high-dimensional multi-stable systems \cite{tang2017,xu2014}, and compared with the Freidlin--Wentzell quasi-potential of large-deviation theory \cite{zhou2016} (see \cite{wang2015} for a review of the landscape and flux theory). Other works derive effective landscapes from the competition between chromatin-modifying enzymes \cite{alarcon2026}, or from Bayesian decision-making by cells, with regimes corresponding to homeostatic, bistable and cancerous states \cite{entropy2026bayesian}. These approaches work within a postulated SDE/Fokker--Planck formalism, in which potential and noise are posited or inferred phenomenologically; within it, Coomer et al.~\cite{coomer2022} show that very different combinations of drift and noise can produce the same steady-state distribution, the identifiability problem that appears here as the dependence of $\Phi$ on the ratio $P'/D$. Our construction is compatible with this tradition, $P$ acting as an effective quasi-potential, and adds the link to compartmental models: $P$ and $D$ are expressed through the switching rates $k_{ij}$ (Eq.~\eqref{eq:escolhaDv}), via the chain $k_{ij}\to(D,v)\to v=-P'$. A related picture is the ``explore-then-settle'' dynamic of Jiménez-Sánchez et al.~\cite{jimenezsanchez2026}, in which high evolvability drives rapid exploration of the landscape before cells settle at fitness peaks; they treat evolvability as an evolving trait and focus on how exploration and selection interact, whereas we obtain the SGD structure from the gradient reading of the switching velocity and connect it to the population-level Fokker--Planck equation.

\subsection{Non-local transitions and a unified multi-scale model}
\label{sec:nonlocal}

The previous sections assumed that cells move through phenotype space by local exploration resulting from the combination of deterministic drift and random diffusion. The original discrete model \eqref{eq:modelond}, however, allows jumps between any two compartments. Biologically, such non-local transitions are rare but important: cells may make large jumps in extended phenotype space, for instance, through genetic mutations that drastically alter gene expression programs \cite{greavesmaley2012}, through heritable epigenetic changes that silence or activate entire loci \cite{feinberg2016}, or through rare but consequential events such as polyploidization \cite{davoli2011}. We now extend the continuous framework to include these jumps, which leads to a single equation containing all the mechanisms discussed so far (Figure~\ref{fig:roadmap}d,f).

\medskip
\noindent\textbf{From discrete non-local transitions to integro-differential equations.}
When the number of compartments tends to infinity, the discrete transition rates $k_{ij}$ become a continuous kernel $K(x,y) \ge 0$, quantifying the rate at which cells at phenotype $y$ jump to phenotype $x$. The linear transition term in \eqref{eq:modelond} then becomes an integral, and the full continuous model with non-local transitions and uniform competition therefore reads
\begin{equation}\label{eq:integroPDE}
    \partial_t u(x,t) = r(x,t)\, u(x,t) \, g(U(t))
    + \int_\Omega \bigl[ K(x,y) u(y,t) - K(y,x) u(x,t) \bigr] \, dy.
\end{equation}

\noindent\textbf{Kramers--Moyal expansion and Pawula's theorem.} This type of \emph{integro-differential equation} is common in theoretical biology, and is closely related to the advection--diffusion PDE derived earlier.

Suppose that transitions are {localized} in phenotype space, meaning that $K(x,y)$ is sharply peaked around $x \approx y$. More precisely, assume $K(x,y)$ can be written as a function of the jump size $\Delta = x - y$ and the starting point $y$, i.e., $K(x,y) = W(y, \Delta)$ with $W(y, \cdot)$ concentrated near $\Delta = 0$. In this case, we can expand the integral term in a Taylor series in the jump size. This is the \emph{Kramers--Moyal expansion},
\begin{equation}\label{eq:KM}
    \int_\Omega \bigl[ K(x,y) u(y) - K(y,x) u(x) \bigr] \, dy
    = \sum_{n=1}^{\infty} \frac{(-1)^n}{n!}\, \partial_x^n \bigl[ M_n(x)\, u(x) \bigr],
\end{equation}
where $M_n(x) = \int_{-\infty}^{\infty} \Delta^n\, W(x,\Delta)\, d\Delta$ is the $n$-th moment of the jumps that start at $x$. Truncating at second order gives
\[
    -\partial_x\bigl( M_1 u \bigr) + \tfrac12\,\partial_x^2\bigl( M_2 u \bigr)
    = -\partial_x\bigl( v u \bigr) + \partial_x\bigl( D\, \partial_x u \bigr),
\]
with $D(x) = \tfrac12 M_2(x)$ and $v(x) = M_1(x) - D'(x)$, which is exactly the advection--diffusion operator of \eqref{eq:modeloidloc}, or its Fokker--Planck interpretation. The diffusivity is half the second moment of the jump kernel, while the first moment, $M_1 = v + D'$, is the It\^o drift of \eqref{eq:SDEito}; it coincides with the flux-form velocity $v$ only when $D$ is constant.

Truncating the expansion at a higher order does not, however, give a legitimate model. \emph{Pawula's theorem} \cite{pawula1967} states that, for a nonnegative transition probability, the Kramers--Moyal expansion either stops at $n=2$ or contains infinitely many terms: a finite truncation of order $n>2$ cannot be the exact generator of an evolution that preserves positivity, although it may still serve as an approximation \cite{risken1996}. In our modeling context, this rules out any exact intermediate description built from higher-order spatial derivatives (third- or fourth-order terms, say): an exact local, differential description of switching stops at drift and diffusion, and the transitions that it does not capture, the large jumps, must be kept as an integral operator. Pawula's theorem does not imply that a given switching dynamics must be either purely local or purely non-local; local exploration and jumps can coexist, as they do in the model below.

\medskip
\noindent\textbf{Unified macroscopic model.}
We therefore keep local exploration and large jumps as two separate ingredients, each with its own coefficients: a local part, with velocity $v$ and diffusivity $D$ as in \eqref{eq:modeloidloc}, for the small, continual changes of phenotype, and a non-local part, with jump kernel $K$, for the rare, large-effect events. We weight the jump part by a parameter $\varepsilon\ge0$ that sets the rate of jumps relative to local exploration: $\varepsilon=0$ means jumps are absent, and larger $\varepsilon$ means they occur more frequently. This gives a single equation for the population density $u(x,t)$ on a phenotype domain $\Omega\subseteq\mathbb{R}^d$ (we write the multi-dimensional generalization, though our analysis has focused on $d=1$):
\begin{equation}\label{eq:unified}
    \begin{aligned}
        \partial_t u(x,t) &=
        \underbrace{r(x,t) u(x,t) \, g(U(t))}_{\text{reaction (vital dynamics)}}
        - \underbrace{\nabla \cdot \bigl( v(x,t) u(x,t) \bigr)}_{\text{advection (gradient flow)}}
        + \underbrace{\nabla \cdot \bigl( D(x,t) \nabla u(x,t) \bigr)}_{\text{diffusion (random noise)}} \\
        &\quad + \underbrace{\varepsilon \int_\Omega \bigl[ K(x,y) u(y,t) - K(y,x) u(x,t) \bigr] \, dy}_{\text{non-local jumps (mutations, large epigenetic shifts)}} .
    \end{aligned}
\end{equation}
Equation \eqref{eq:unified} is a general macroscopic description of cell populations with phenotypic plasticity. It contains, as special cases:
\begin{itemize}
    \item The discrete compartment model \eqref{eq:modelond} (when $x$ is discretized: the local terms give the nearest-neighbor rates of Section~\ref{sec:continuous}, and $K(x,y)$ the long-range ones);
    \item The reaction--diffusion--advection PDE \eqref{eq:modeloidloc} ($\varepsilon = 0$);
    \item The Fokker--Planck equation of stochastic gradient descent \eqref{eq:FPgrad} ($v = -\nabla P$, $\varepsilon = 0$);
    \item The pure jump integro-differential equation \eqref{eq:integroPDE} ($v\equiv0$, $D\equiv0$, $\varepsilon=1$).
\end{itemize}
Each term encodes a distinct biological process: proliferation and competition (reaction); directed motion toward stable phenotypes (advection); stochastic fluctuations that explore the landscape (diffusion); and abrupt, large-effect events that instantly relocate a cell to a distant region of phenotype space (jumps).

\medskip
\noindent\textbf{The single-cell perspective.}
Through the same duality between Fokker--Planck equations and SDEs discussed in Section \ref{sec:stochastic}, the unified model \eqref{eq:unified} translates into an individual-level description (Figure~\ref{fig:roadmap}f). The phenotype $X_t$ of a single cell follows
\begin{equation}\label{eq:unifiedSDE}
    dX_t = \underbrace{\bigl( -\nabla P(X_t) + \nabla D(X_t) \bigr) \, dt}_{\text{gradient flow}}
    + \underbrace{\sqrt{2D(X_t)} \, dW_t}_{\text{random fluctuations}}
    + \underbrace{dJ^{\varepsilon}_t}_{\text{non-local jumps}},
\end{equation}
where $J^{\varepsilon}_t$ is a pure-jump process. Formally, $J^{\varepsilon}_t$ is constructed from a Poisson random measure with state-dependent intensity $\varepsilon K(y,X_{t^-})\,dy\,dt$ \cite{oksendalsulem2007}.

Equation \eqref{eq:unifiedSDE} generalizes the Ornstein--Uhlenbeck process to arbitrary landscapes with jumps. It combines three ingredients: \emph{order} (the potential $P$), \emph{noise} (the diffusion $D$), and \emph{chance} (the jumps $J^{\varepsilon}_t$). The population-level model \eqref{eq:unified} is recovered as the forward Kolmogorov equation for the density of $X_t$, augmented by a reaction term for proliferation and competition. A cell therefore performs a \emph{stochastic gradient descent with replication and mutation} on the epigenetic landscape.

\medskip
\noindent\textbf{Relation to prior work.} Alvarez et al.~\cite{clairambault2022} and Lorenzi et al.~\cite{villa2025} study integro-differential equations for structured populations, including non-local terms that represent mutations or long-range dispersal. Our unified model \eqref{eq:unified} is closely related to these formulations; here the local part is written in terms of a potential and the non-local part is weighted by a parameter $\varepsilon$. The connection between jump processes and diffusion equations via the Kramers--Moyal expansion is a classical result in stochastic processes \cite{vankampen1992}. That drift, diffusion and jumps exhaust the possibilities follows from Courr\`ege's theorem: under mild regularity conditions, the generator of a Markov process whose evolution preserves positivity (in the sense of the positive maximum principle) consists of a drift, a diffusion and a jump part \cite{courrege1965,bottcher2013}, a state-dependent extension of the L\'evy--Khintchine formula \cite{applebaum2009}, and the unified model has exactly this form. The irrelevance of fitness under uniform competition also holds for purely non-local switching ($v\equiv0$, $D\equiv0$ in \eqref{eq:unified}), provided the jump kernel connects every phenotype to a common region within a finite number of jumps (the analogue of (H3)), and for the unified model \eqref{eq:unified} itself, in which jumps are combined with drift and diffusion; the limit is then the stationary density of the full switching dynamics \cite{companion2026}. Finally, jumps need not be rare. Proteins are often produced in bursts, and the classical model of Friedman, Cai and Xie \cite{friedman2006}, with linear degradation and exponentially distributed bursts, is an instance of \eqref{eq:unified} with a linear drift $v(x)=-\gamma x$, no diffusion ($D\equiv0$) and a one-sided jump kernel, $K(x,y)=k\,b^{-1}e^{-(x-y)/b}$ for $x>y$ (bursts at rate $k$ with mean size $b$); its stationary density is a gamma distribution, a shape commonly observed for protein abundances \cite{taniguchi2010}. Bursts are thus frequent jumps in expression space, whereas mutations are rare jumps in an extended phenotype space.

\subsection{Examples: phenotypic distributions emerging from the landscape}
\label{sec:examples}

We now show with simulations (Figure~\ref{fig:examples}) how deterministic drift and random motion together shape the phenotypic distribution, as sketched for single cells in Figure~\ref{fig:roadmap}e,f, and produce several well-known probability laws.

Throughout this section we assume that the population has reached the equilibrium $U^*$, so that the distribution is governed solely by the transport terms. Recall that the stationary density $u_{\text{eq}}(x)$ satisfies \eqref{eq:stationary}, which, using $v(x) = -P'(x)$, becomes
\begin{equation}\label{eq:stationary2}
    u_{\text{eq}}(x) = C \exp\!\Bigl( -\int_{x_0}^x \frac{P'(s)}{D(s)}\,ds \Bigr),
\end{equation}
with $C$ a normalization constant.

\medskip
\noindent\textbf{Case 1: Flat landscape -- uniform distribution.}
When the epigenetic landscape is flat, $P(x) \equiv \text{const}$, the velocity vanishes ($v \equiv 0$). Cells move purely by random fluctuations, and \eqref{eq:stationary2} yields $u_{\text{eq}}(x) = C$, a \emph{uniform distribution} on the domain $\Omega$. All phenotypes are equally probable; microscopically, the SDE reduces to a pure Brownian motion $dX_t = \sqrt{2D}\,dW_t$, and the cell wanders without directional bias (Figure~\ref{fig:examples}a).

\begin{figure}[htbp]
\centering
\includegraphics[width=0.95\textwidth]{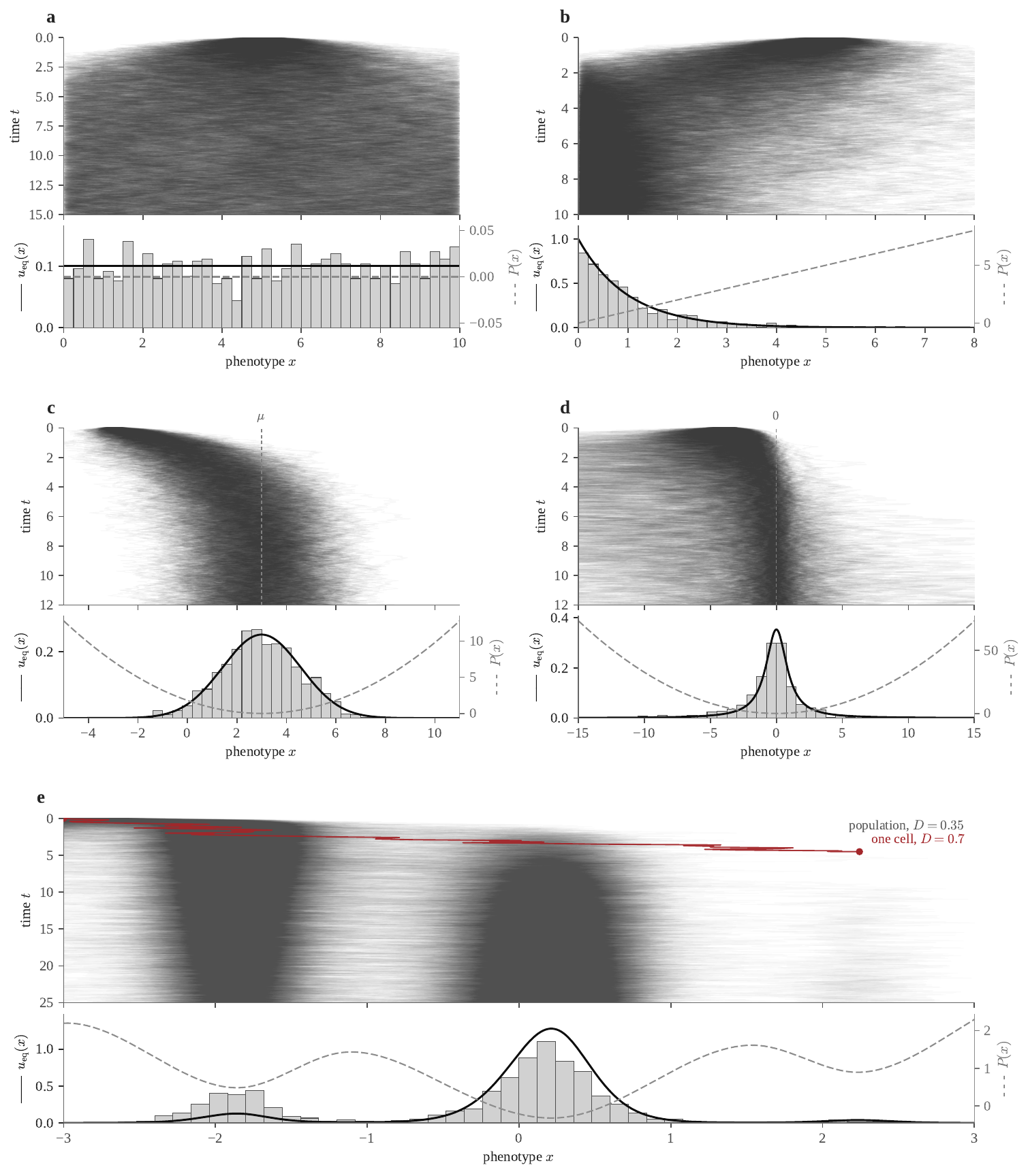}
\caption{\footnotesize\linespread{1}\selectfont\textbf{Known distributions emerging from SGD on different potential landscapes.} Direct simulation of the single-cell SDE \eqref{eq:SDEito}. In each panel, the top plot shows $N$ individual trajectories (Euler--Maruyama, phenotype on the horizontal axis, time on the vertical axis plotted downward so that its bottom edge aligns with the final-time histogram below); the bottom plot shows the histogram of final positions, the analytic stationary density (solid black, left axis), and the potential $P(x)$ (dashed gray, right axis). \textbf{(a)} Case 1: flat landscape, uniform distribution on $[0,10]$ with reflecting boundaries, $D=1$, $N=1000$. \textbf{(b)} Case 2: linear potential $P(x)=ax$, exponential distribution~\eqref{eq:exponential} with $a=D=1$ and a reflecting boundary at $x=0$, $N=1000$. \textbf{(c)} Case 3: harmonic potential~\eqref{eq:harmonic_potential}, Gaussian distribution~\eqref{eq:gaussian} with $\theta=0.4$, $\mu=3$, $D=1$, $N=1000$. \textbf{(d)} Case 4: state-dependent noise $D(x)=D_0(1+x^2)$ with $D_0=0.3$, $\theta=0.65$, giving the heavy-tailed distribution~\eqref{eq:heavytail}, plotted on a wider horizontal scale than (c), $N=1000$. \textbf{(e)} The general Boltzmann-form distribution~\eqref{eq:boltzmann} for the rugged, multi-well landscape sketched schematically in Figure~\ref{fig:roadmap}e,f ($D=0.35$, constant); all $N=1500$ trajectories start at the left edge ($x=-3$), explore the landscape through noise-driven barrier crossings, and settle partially in the first well, predominantly in the deepest one, and rarely in the shallow well on the far right; at $t=25$ the population is still in transit, with the left well overpopulated relative to $u_{\text{eq}}$. Compare with the continuous counterpart, Figure~\ref{fig:pdesim}c. The colored line is one cell with twice the noise ($D=0.7$), a more plastic cell, drawn until it reaches the far well at $t\approx4.5$; it is a fast realization chosen for illustration (typical first-passage times are given in the text). Euler--Maruyama time step $\Delta t=0.01$ ($0.02$ in (c), $0.004$ in (d)); reflecting boundaries are implemented by mirror reflection, and in (e) the drift is extended as a constant outside $[-3,3]$.}
\label{fig:examples}
\end{figure}

\medskip
\noindent\textbf{Case 2: Linear potential -- exponential distribution.}
Consider a linear potential $P(x) = a x$ with $a > 0$, giving a constant negative velocity $v(x) = -a$ that drives cells toward smaller $x$. With constant diffusivity $D$ on $\Omega = [0,\infty)$ and a reflecting boundary at $x = 0$, the stationary solution becomes
\begin{equation}\label{eq:exponential}
    u_{\text{eq}}(x) = \frac{a}{D}\, e^{-(a/D)x}.
\end{equation}
This is an \emph{exponential distribution} with rate $a/D$, i.e., mean $D/a$. The drift concentrates the population near the origin, while diffusion allows occasional escapes to larger $x$. Biologically, this describes cells pushed toward a boundary phenotype (e.g., a fully differentiated state) by a potential of constant slope, with noise providing the observed tail of less-differentiated cells. At the single-cell level, $dX_t = -a\,dt + \sqrt{2D}\,dW_t$: the cell experiences a constant downward drift perturbed by random fluctuations (Figure~\ref{fig:examples}b).

\medskip
\noindent\textbf{Case 3: Harmonic potential -- Gaussian distribution.}
An important case is the harmonic potential
\begin{equation}\label{eq:harmonic_potential}
    P(x) = \frac{\theta}{2}(x - \mu)^2,
\end{equation}
for which $v(x) = -P'(x) = -\theta(x - \mu)$, with $\theta > 0$ and a reference phenotype $\mu$. This is the velocity field of the Ornstein--Uhlenbeck process \eqref{eq:OUSDE}. With constant diffusivity $D$, equation \eqref{eq:stationary2} gives the \emph{Gaussian} stationary distribution
\begin{equation}\label{eq:gaussian}
    u_{\text{eq}}(x) = \frac{1}{\sqrt{2\pi \sigma^2}} \exp\!\Bigl( -\frac{(x - \mu)^2}{2\sigma^2} \Bigr), \qquad \sigma^2 = \frac{D}{\theta}.
\end{equation}
The mean phenotype $\mu$ is the minimum of the potential, i.e., the ``center of attraction'', $\theta$ controls the steepness of the valley, and $D$ is the noise intensity. The variance $\sigma^2 = D/\theta$ expresses the balance between noise and order: a steep valley (large $\theta$) produces a tightly concentrated population; strong noise (large $D$) produces a broad distribution (Figure~\ref{fig:examples}c).

\medskip
\noindent\textbf{Universality of the Gaussian.}
The Gaussian is, in a precise sense, the \emph{generic} local approximation near any stable equilibrium. If $P(x)$ has a non-degenerate minimum at $x = \mu$, then $P'(\mu) = 0$ and $P''(\mu) > 0$. Expanding to second order yields $P(x) \approx P(\mu) + \frac{1}{2}P''(\mu)(x-\mu)^2$, which is precisely the harmonic potential with $\theta = P''(\mu)$ capturing the curvature of the landscape around $x=\mu$, with the Ornstein--Uhlenbeck velocity field $v(x)=-P'(x)\approx-P''(\mu)(x-\mu)$. To leading order the diffusivity is constant, $D(x) \approx D(\mu)$, and the stationary distribution is Gaussian with variance $D(\mu)/P''(\mu)$. Higher-order terms in the expansion of $P(x)$ introduce polynomial corrections to the exponent, deforming the tails. In this sense, the Ornstein--Uhlenbeck process of Section~\ref{sec:stochastic} is the local linearization of the general stochastic gradient descent \eqref{eq:SDEgradient} around any stable phenotype: every minimum of the epigenetic landscape looks, up close, like a quadratic well with OU dynamics. A rugged epigenetic landscape with multiple minima can thus be approximated locally by a mixture of Gaussians, each centered at a stable phenotypic state with variance inversely proportional to the local curvature of the potential (Figure~\ref{fig:examples}e). Measured distributions of protein abundance are nevertheless often skewed, closer to gamma or log-normal than to Gaussian \cite{taniguchi2010}: the harmonic approximation describes weak fluctuations around a stable phenotype, not every source of variability, and bursty production, a form of frequent jumps, produces gamma distributions (Section~\ref{sec:nonlocal}).

\medskip
\noindent\textbf{Case 4: State-dependent noise -- heavy tails and effective potentials.}
When the diffusivity $D(x)$ is not constant, the stationary distribution can be far from Gaussian. A biologically motivated example is $D(x) = D_0(1 + x^2)$ with $x\in (-\infty,+\infty)$, representing increasing phenotypic instability away from the origin (e.g., higher epigenetic noise in less-differentiated states). With a harmonic potential $P(x) = \frac{\theta}{2}x^2$, the stationary solution, which would be a Gaussian centered at $x=0$ in case of constant diffusivity, becomes
\begin{equation}\label{eq:heavytail}
    u_{\text{eq}}(x) = C\,(1 + x^2)^{-\theta/(2D_0)}.
\end{equation}
This is a \emph{Student's t-like} distribution with power-law tails that decay far more slowly than a Gaussian (and is normalizable provided $\theta>D_0$). The population maintains a non-negligible fraction of cells far from the mean phenotype even under a restoring force. Where rare, distant subpopulations of this kind are observed, this example shows that state-dependent noise, and not only a deformed potential, can produce them. Compared with Case 3, the same restoring force gives a sharper central peak and much longer tails, only because the noise intensity grows with distance from the optimum (Figure~\ref{fig:examples}d, plotted on a wider horizontal scale than panel c to accommodate the tails).

\medskip
\noindent\textbf{The effective potential: a general Boltzmann form.}
This example is a special case of a general formula. For arbitrary $D(x)$, the stationary distribution can be written in the \emph{Boltzmann form}
\begin{equation}\label{eq:boltzmann}
    u_{\text{eq}}(x) = C \exp\!\bigl( -\Phi(x) \bigr),
\end{equation}
where the \emph{effective potential} $\Phi(x)$ satisfies $\Phi'(x) = P'(x)/D(x)$. For a rugged, multi-well landscape and constant $D$, this is simply the familiar $\exp(-P(x)/D)$, and simulating it directly (Figure~\ref{fig:examples}e) shows a population initialized far from equilibrium exploring the landscape by repeated noise-driven barrier crossings and settling, in the long run, into each well in proportion to its Boltzmann weight $\int_{\text{well}}e^{-P/D}\,dx$. At the final time shown, $t=25$, the population is still in transit: the left well is overpopulated relative to $u_{\text{eq}}$, and the far, shallow well is reached only on the time scale of the first-passage times discussed below. This is the single-cell counterpart of Figure~\ref{fig:pdesim}, where the population density converged to the same stationary distribution on the same landscape.

\medskip
\noindent\textbf{Noise versus landscape.}
When $D(x)$ varies, the landscape $P(x)$ is reshaped by the local intensity of phenotypic noise into the effective potential $\Phi(x)$. Where $D(x)$ is elevated, reflecting for instance a relaxation of the biochemical constraints that normally restrict fluctuations, $\Phi$ is flattened relative to $P$ and the barriers between phenotypic states are effectively lowered. This is an alternative, and perhaps underappreciated, route to increased plasticity: rather than altering the landscape $P(x)$ itself, a cell population may simply become ``noisier'' in specific regions of phenotype space. The effect on single cells can be large: on the multi-well landscape of Figure~\ref{fig:examples}e, doubling the noise shortens the typical time needed to reach the distant, shallow well roughly tenfold, from $\approx250$ to $\approx22$ time units (median first-passage times over $10^3$ cells), because barrier crossing is exponentially sensitive to the ratio of barrier height to noise (the Kramers effect). The converse is an identifiability problem. Two populations with the same potential $P(x)$, that is, the same stable states and barriers, but different noise profiles $D(x)$ have different stationary distributions, and a population with elevated noise in some region looks, in a static snapshot, as if its barriers there had been lowered; inferring $P$ from $u_{\text{eq}}$ therefore requires knowing $D$ (Section~\ref{sec:gradient}). This is particularly relevant in cancer, where the fidelity of epigenetic maintenance is compromised and increased stochastic variability of DNA methylation has been documented across tumor types \cite{feinbergirizarry2010,hansen2011,landau2014}: what appears as a flattened landscape in single-cell data may reflect increased noise rather than a reshaping of $P(x)$.

\medskip
\noindent\textbf{Mobility, temperature, and the choice of convention.}
A physical parametrization makes precise what a snapshot can and cannot reveal. For a Brownian particle, the diffusivity is the product of a mobility $\mu$, which measures how fast the particle responds to a force, and a temperature $T$, which measures the intensity of the thermal fluctuations: $D=\mu T$ (the Einstein relation \cite{einstein1905}). By analogy, write $D=\mu T$ and $v=-\mu E'$, with a mobility $\mu(x)$, an effective temperature $T(x)$ and an energy landscape $E(x)$, so that $P'=\mu E'$ \cite{lau2007}. In the flux form, $u_{\text{eq}}\propto\exp\bigl(-\int E'/T\bigr)$: the mobility cancels, and the effective potential is $\Phi'=E'/T$. Case~4 is then a temperature that varies across phenotype space at fixed mobility, which is legitimate for a system out of equilibrium. Elevated plasticity accordingly has two sub-routes. A population can become \emph{hotter}, which flattens $u_{\text{eq}}$ and, in a snapshot, is confounded with a change of the landscape; or it can become \emph{more mobile}, with $T$ and $E$ unchanged, which leaves $u_{\text{eq}}$ unchanged and is invisible in any snapshot, but speeds up all transitions. Since the mixing time $1/\lambda$ cannot increase when $\mu$ increases, a more mobile tumor also forgets past selection faster (Section~\ref{sec:uniform_competition}).

\medskip
\noindent\textbf{Relation to prior work.} The Boltzmann-form stationary distribution \eqref{eq:boltzmann} underlying every case in this section is the same object constructed, from the opposite direction, by the landscape-flux literature discussed in Section~\ref{sec:gradient}: Wang et al.~\cite{wang2011} build the potential directly from the steady-state distribution of a postulated gene-regulatory SDE, whereas here $u_{\text{eq}}(x)$ is a consequence of a potential and diffusivity that are themselves derived from microscopic switching rates. Case 3 and the argument on the universality of the Gaussian recover, as a special case, the classical result that stabilizing selection around a fixed optimum yields a Gaussian phenotypic distribution, as in the Ornstein--Uhlenbeck models of de Souza Silva et al.~\cite{desouzasilva2023} discussed in Section~\ref{sec:stochastic}. Case 4 can be compared with other routes to non-Gaussian phenotypic distributions in the literature reviewed here, which rest on different mechanisms: de Souza Silva et al.~\cite{desouzasilva2023} obtain skewed distributions from a {moving} fitness optimum combined with phenotypic plasticity; Kessler and Levine~\cite{kessler2022} obtain a non-Gaussian (Airy-function) steady state from a {nonlinear}, replicator-mutator-type reaction term combined with a reflecting boundary. Here, the potential $P(x)$ and the restoring dynamics remain exactly as in the Gaussian case, and non-Gaussianity may be generated purely by a state-dependent diffusivity $D(x)$. These three mechanisms are not mutually exclusive, and a full account of a given biological system may require more than one of them; distinguishing between them requires dynamic information.

\subsection{Summary}

This section went from simple two-compartment models of quiescent vs. proliferating cells to a unified framework. Under uniform competition, and on time scales long compared with the mixing time of switching, long-term heterogeneity is governed by transitions and not by fitness. Local transitions correspond to a finite-difference discretization of a reaction--diffusion--advection PDE, with diffusion and advection carried by the symmetric and skew-symmetric parts of the transition matrix. The advection velocity can be interpreted as the negative gradient of an effective epigenetic potential. The resulting model, translated to the individual-cell trajectory, is a stochastic gradient descent on an effective potential landscape. Including occasional jumps, the full model \eqref{eq:unified}-\eqref{eq:unifiedSDE} connects the discrete and continuous descriptions, and the single-cell and population levels.

Taken individually, several components of our framework have antecedents in the literature. What is new here is their assembly into a single chain of arguments, from discrete switching rates to a continuous landscape and to stochastic gradient descent, summarized in Figure~\ref{fig:roadmap}. Some steps of this chain rest on assumptions that are stated explicitly above, notably uniform competition, the gradient form of the velocity, and the choice of stochastic convention for state-dependent noise.

\section{Conclusion and outlook}
\label{sec:conclusion}

We first qualify the main statement about fitness and then summarize the arguments above in a few principles.

\medskip
\noindent\textbf{How general is the irrelevance of fitness?} Theorems~\ref{thm:disc} and~\ref{thm:cont} make two statements: uniform, multiplicative competition makes all phenotypes selectively neutral at saturation, and the selection that acts during growth leaves a transient imprint which switching erases. Fitness returns whenever one of these ingredients fails: under other forms of density dependence, such as additive competition, and when switching is coupled to division (Sections~\ref{sec:uniform_competition} and~\ref{sec:pdeasymp}). Under non-uniform competition, convergence itself can fail (Remark~\ref{rem:oscillations} in Appendix~A). The statement is also asymptotic: it holds on time scales long compared with the mixing time $1/\lambda$ of the switching dynamics, which can be very long in landscapes with high barriers, and it is on this time scale that relapse unfolds after a treatment has ended. We therefore regard the ``primacy of plasticity'' as a property of uniformly competitive models in their long-time limit, not as a general biological law; which regime applies to a given tumor is an empirical question.

\medskip
\noindent\textbf{Connection with machine learning.} In principle, the expression ``stochastic gradient descent'' here plays a different role than in machine learning. In machine learning, the randomness of stochastic gradient descent comes from estimating the gradient on random subsets of the data, whereas here, in equation \eqref{eq:SDEgradient}, it is an explicit noise that lets cells cross the barriers between minima. The two are nevertheless related: a line of work has shown that, under suitable assumptions, gradient-based learning algorithms with noise, whether added explicitly or arising from subsampling, behave approximately as discretizations of Langevin equations of the type \eqref{eq:SDEgradient} \cite{wellingteh2011,mandt2017,smithle2018,malladi2022}, although the correspondence is only approximate \cite{chaudhari2018}.

The same equation nevertheless offers a common language for stochastic optimization and for phenotype dynamics on a landscape: in both, noise helps escape local minima, the landscape can be reshaped by external signals, and occasional large jumps give access to distant regions of the space. These parallels are heuristic. Mathematics and biology have long borrowed from each other \cite{cohen2004mathematics}, and artificial neural networks were themselves inspired by biological neurons; the similarity with the epigenetic dynamics of a cell should be read as an analogy, not as evidence of a shared mechanism.

\medskip
\noindent\textbf{Three rules for complexity: life as a stochastic gradient descent with occasional jumps.} In this light, the single-cell trajectory \eqref{eq:unifiedSDE} can be read as a stochastic gradient descent, in the Langevin sense, with noise that lets the cell escape local minima and occasional jumps that relocate it across the landscape. Evolution, development, and disease are then stochastic search processes on a landscape, with replication amplifying the cells that find favorable regions. Three elementary ingredients govern these dynamics:

\begin{enumerate}
    \item {Minimize energy.} Biological entities, from folding proteins to gene-regulatory networks, relax toward stable configurations: minima of a free energy for passive systems, and of an effective potential for driven, nonequilibrium ones. In the phenotypic realm, this is the drift $v = -\nabla P$ toward stable expression programs; alone, it would leave each cell trapped in the nearest valley, with no diversity and no adaptation.

    \item {Allow randomness.} Stochastic fluctuations (thermal noise, transcriptional bursting \cite{elowitz2002}, unequal partitioning at division) provide the diffusion that lets cells explore the landscape; they contribute to the diversification of cell types during development, alongside deterministic mechanisms such as bifurcations, morphogen gradients and asymmetric divisions, and maintain the heterogeneity that keeps populations resilient. Without the drift, noise alone would spread cells over the whole phenotype space, with no stable cell types.

    \item {Permit rare jumps.} Mutations, large-scale epigenetic reprogramming, and other non-local events give access to regions of the landscape that local exploration would reach only after astronomically long times, if at all. They are a source of evolutionary novelty, and also of malignant transformation.
\end{enumerate}

This triad, \emph{gradient flow, diffusion, and jumps}, corresponds to the three transport terms of the unified model \eqref{eq:unified}, and it is not an ad hoc collection of mechanisms: by Courr\`ege's theorem, under mild regularity conditions, the generator of a Markovian switching dynamics consists of a drift, a diffusion and a jump part (Section~\ref{sec:nonlocal}) \cite{courrege1965,bottcher2013}, and, by Pawula's theorem, the jump part cannot be replaced exactly by higher-order local terms.

\medskip
\noindent\textbf{Cancer as the corruption of the landscape.}
In this picture, cancer can disturb each of the three rules, and the framework separates the corresponding routes. Mutations in epigenetic regulators (writers, erasers, remodelers) reshape the landscape $P(x)$, creating pathological minima that correspond to proliferative or resistant states, in line with the view of cancer states and drug resistance as attractors of the underlying networks \cite{huang2009,pisco2013,pisco2015}. Elevated plasticity can instead correspond to a larger diffusivity $D(x)$ with the landscape intact, consistent with the increased epigenetic variability of tumors \cite{hansen2011,landau2014}: either a higher effective temperature, which lowers the barriers effectively and flattens the stationary density, or a higher mobility, which leaves the stationary density unchanged and speeds up transitions (Section~\ref{sec:examples}); this route requires no mutation in chromatin modifiers and may be invisible to genomic sequencing, yet detectable, with dynamic data, in single-cell distributions (Section~\ref{sec:gradient}). If the extra variability arises from errors made at replication, however, switching is coupled to division, and fitness regains its influence on the long-term composition (Section~\ref{sec:uniform_competition}). Driver mutations, finally, act as non-local jumps that seed cells in distant regions of the landscape (Section~\ref{sec:nonlocal}). Which of these routes dominates in a given tumor, and on which time scale selection or plasticity shapes the resulting heterogeneity (Sections~\ref{sec:uniform_competition} and~\ref{sec:pdeasymp}), are empirical questions that the framework helps to formulate.

\medskip
\noindent\textbf{Future directions.}
Several directions remain open:
\begin{itemize}
    \item {Inference from data:} Single-cell multi-omics data (scRNA-seq, scATAC-seq, single-cell methylation) can be used to reconstruct $P(x)$, $D(x)$ and $K(x,y)$, and the diversity indices used to quantify intratumor heterogeneity \cite{ferrallfairbanks2019} become functionals of the stationary density $\psi$: a noisier population has a flatter $\psi$ and a higher Shannon diversity. Separating $P$ from $D$ requires dynamic data (Section~\ref{sec:examples}), and identifiability analyses indicate which experimental designs suffice \cite{browning2019,browning2025identifiability}.
    \item {Control:} If the landscape can be inferred, one can ask how to manipulate it. For example, by designing epigenetic therapies that reshape $P(x)$ to eliminate resistant minima, or by timing treatments to exploit the stochastic dynamics of persistence.
    \item {Multi-dimensional extensions:} Real phenotypic landscapes are high-dimensional. Extending the analysis to $d>1$, where non-gradient drifts arise (Section~\ref{sec:gradient}), and connecting it to manifold-learning techniques such as diffusion maps is a natural next step; moment-closure reductions of phenotype-structured PDEs to low-dimensional ODE systems \cite{villa2025moments} may keep such extensions computationally tractable.
    \item {Beyond the present theorems:} Theorems~\ref{thm:disc} and \ref{thm:cont} assume time-independent switching and smooth coefficients; in the continuum, the result covers one-dimensional phenotypes and bounded phenotype domains in higher dimension \cite{companion2026}. Extending them to rough coefficients, to unbounded higher-dimensional phenotype spaces, and to switching rates that change in time (for instance under treatment) is open. So is the question, raised by the counterexample of Appendix~A, of which forms of non-uniform competition still lead to convergence, and how close to $\pi$ the composition remains when competition is nearly uniform.
    \item {Spatial aspects and non-uniform competition:} In solid tumors, spatial heterogeneity and local competition for resources may violate the uniform competition hypothesis. Extending the framework to physical space and non-uniform competition would show how local selection and long-range transport interact; in this setting, fitness differences may re-enter the asymptotic dynamics alongside the transport terms.
    \item {Non-Markovian dynamics and memory effects:} Transition rates may depend on a cell's history, as in the ``hypoxic memory'' of cells that retain an invasive, slow-cycling phenotype after reoxygenation, recently incorporated into a phenotype-structured PDE \cite{sadhu2026}. Allowing $P$ or $D$ to depend on past states would connect the framework to generalized Langevin dynamics, at the cost of the direct Fokker--Planck correspondence used here.
\end{itemize}

Under the assumptions stated along the way, Waddington's landscape can thus be given a concrete mathematical form, in which each cell performs a stochastic gradient descent, with occasional jumps, on a landscape of phenotypic states.

\section*{Appendix}
\addcontentsline{toc}{section}{Appendix}

\subsection*{A. Proof of Theorem~\ref{thm:disc}}

The proof uses two elementary tools: the variation-of-constants formula for linear ODEs, and the fact that an irreducible Markov chain forgets its initial condition. The latter is classical (see, e.g., \cite{norris1997}); we include a short proof to keep the argument self-contained. Throughout, $\abs{v}_1=\sum_i\abs{v_i}$, $\one=(1,\dots,1)^\top$, and a vector is \emph{zero-sum} if $\sum_iv_i=0$.

\paragraph{A preliminary fact.} Every column of $A$ sums to zero: $\sum_iA_{ij}=0$ for each $j$, because every cell that leaves compartment $j$ enters some other compartment, so what compartment $j$ loses is exactly gained by the others. In vector notation, $\one^\top A=0$. Differentiating $\one^\top P(t)=\one^\top e^{tA}$ in $t$,
\[
\frac{d}{dt}\bigl(\one^\top P(t)\bigr)=\one^\top A\,P(t)=0,
\]
and at $t=0$, $\one^\top P(0)=\one^\top I=\one^\top$. Hence $\one^\top P(t)=\one^\top$ for every $t\ge0$: every column of $P(t)=e^{tA}$ sums to $1$ and, together with positivity (proved below), is a probability vector.

\begin{lemma}[Switching forgets the initial state]\label{lem:mix}
Assume \textup{(H3)} and let $P(t)=e^{tA}$ be the solution operator of $u'=Au$.
\begin{enumerate}[label=(\roman*), itemsep=2pt]
\item $P(t)$ maps nonnegative vectors to nonnegative vectors and preserves the total, $\sum_i(P(t)v)_i=\sum_iv_i$. In particular $\abs{P(t)v}_1\le\abs{v}_1$ for every $v$.
\item For every $\tau>0$ all entries of $P(\tau)$ are positive.
\item There are $C\ge1$, $\lambda>0$ such that $\abs{P(t)w}_1\le Ce^{-\lambda t}\abs{w}_1$ for all zero-sum $w$ and $t\ge0$.
\item There is a unique vector $\pi$ with $A\pi=0$ and $\sum_i\pi_i=1$, and all its entries are positive.
\end{enumerate}
\end{lemma}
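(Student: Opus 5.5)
We prove the four items in order, using only the structure of $A$: off-diagonal entries $a_{ij}=k_{ji}\ge0$ and columns summing to zero (the preliminary fact above).

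\emph{Positivity, (i).} Choose $c\ge\max_i\abs{a_{ii}}$ and write $A=B-cI$, where $B=A+cI$ has all entries nonnegative. Since $B$ and $cI$ commute, $P(t)=e^{-ct}e^{tB}=e^{-ct}\sum_{m\ge0}t^mB^m/m!$, and every term of this series is entrywise nonnegative. Hence $P(t)$ maps nonnegative vectors to nonnegative vectors. Conservation of the total is the identity $\one^\top P(t)=\one^\top$ already established. For a general $v$, split $v=v^+-v^-$ into positive and negative parts. Then
\[
\abs{P(t)v}_1\le \one^\top P(t)v^+ + \one^\top P(t)v^- = \abs{v}_1 .
\]

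\emph{Strict positivity, (ii).} We use the same series. The entry $(B^m)_{ij}$ is a sum over paths $j\to\cdots\to i$ of length $m$ of products of entries of $B$. For $i\ne j$, (H3) gives a chain of positive rates $j=l_0\to l_1\to\cdots\to l_m=i$ with $k_{l_{s}l_{s+1}}>0$, i.e.\ $a_{l_{s+1}l_s}>0$. Along this path every factor of $B$ is positive, so $(B^m)_{ij}>0$. For $i=j$, the $m=0$ term contributes $1$. Since the prefactor $t^m/m!$ is positive for $\tau>0$, every entry of $P(\tau)$ is positive.

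\emph{Contraction, (iii).} This is the main step, and we handle it with a Doeblin-type argument at the fixed time $\tau=1$. Let $\delta=\min_{i,j}P(1)_{ij}>0$; note that $n\delta\le1$ because columns sum to one. Take a zero-sum $w$. Then $\one^\top w^+=\one^\top w^-=\tfrac12\abs{w}_1$, and $P(1)w=P(1)w^+-P(1)w^-$. Each of $P(1)w^\pm$ has every entry at least $\delta\cdot\tfrac12\abs{w}_1$. Subtracting the common floor vector $\delta\tfrac12\abs{w}_1\one$ from both therefore leaves nonnegative vectors of total $\tfrac12\abs{w}_1(1-n\delta)$, which gives
\[
\abs{P(1)w}_1\le(1-n\delta)\abs{w}_1 .
\]
Zero-sum vectors are preserved by $P(t)$, so we can iterate: $\abs{P(m)w}_1\le(1-n\delta)^m\abs{w}_1$. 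For non-integer $t$, combine this with the nonexpansiveness from (i) on the fractional part. This yields the bound with $C=(1-n\delta)^{-1}$ and $\lambda=-\ln(1-n\delta)$. The degenerate case $n\delta=1$ means $P(1)w=0$ for every zero-sum $w$, and then any $\lambda$ works.

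\emph{Stationary distribution, (iv).} Fix any probability vector $p$, for instance $e_1$. For $s>0$ the difference $P(s)p-p$ is zero-sum, so by (iii)
\[
\abs{P(t+s)p-P(t)p}_1\le Ce^{-\lambda t}\cdot2 .
\]
Thus $P(t)p$ is Cauchy and converges to some $\pi$. The limit is a probability vector, and it is invariant, since $P(s)\pi=\lim_t P(t+s)p=\pi$. Differentiating the invariance at $s=0$ gives $A\pi=0$. Positivity follows from $\pi=P(1)\pi$ together with (ii). For uniqueness, suppose $\pi'$ is another probability vector with $A\pi'=0$. Then $\pi-\pi'$ is zero-sum and fixed by $P(t)$, so (iii) forces it to vanish.
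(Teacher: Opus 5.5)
Your proof is correct and follows the paper's argument step by step: the shift $B=A+cI\ge0$ with its exponential series for (i)--(ii), the Doeblin-type minorization of $P(1)$ with the common floor $\tfrac{\delta}{2}\abs{w}_1\one$ for (iii), and the Cauchy argument for $P(t)p$ for (iv). The only deviation is cosmetic: where the paper replaces $\delta$ by $\delta/2$ when $n\delta=1$, you note that $P(1)w=0$ on zero-sum vectors in that case, which also works provided you take $C=e^{\lambda}$ there instead of $(1-n\delta)^{-1}$.
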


\begin{proof}
\emph{(i) Positivity and mass conservation.} Let $c=\max_i\sum_{l\ne i}k_{il}$, the largest total exit rate over all compartments, and $B=A+cI$. Every off-diagonal entry of $B$ equals that of $A$, hence is $\ge0$, and every diagonal entry is $A_{ii}+c=c-\sum_{l\ne i}k_{il}\ge0$ by the choice of $c$; so $B\ge0$ entrywise. Then
\[
P(t)=e^{tA}=e^{-ct}e^{tB}=e^{-ct}\sum_{k\ge0}\frac{t^kB^k}{k!}
\]
is a (convergent) sum of nonnegative matrices scaled by $e^{-ct}>0$, hence $P(t)\ge0$ entrywise. Mass conservation is the preliminary fact above, $\one^\top P(t)=\one^\top$, i.e.\ $\sum_i(P(t)v)_i=\sum_iv_i$ for every vector $v$. For the contraction bound, write $(P(t)v)_i=\sum_jP(t)_{ij}v_j$; by the triangle inequality and $P(t)_{ij}\ge0$,
\[
\abs{P(t)v}_1=\sum_i\Bigl|\sum_jP(t)_{ij}v_j\Bigr|\ \le\ \sum_i\sum_jP(t)_{ij}\abs{v_j}\ =\ \sum_j\abs{v_j}\underbrace{\sum_iP(t)_{ij}}_{=\,1\text{ by mass conservation}}\ =\ \abs{v}_1 .
\]

\emph{(ii) Strict positivity of $P(\tau)$.} With $B$ as above, the entry $(B^k)_{ij}$ is a sum of products $B_{i\,l_1}B_{l_1l_2}\cdots B_{l_{k-1}j}$ over paths of length $k$ from $j$ to $i$, and it is positive as soon as one such path uses only positive rates $k_{lm}>0$ (recall $B_{lm}=A_{lm}=k_{ml}$ for $l\ne m$). By (H3) every compartment $i$ can be reached from every compartment $j$ through such a path, of some length $k=k(i,j)$; and $(B^0)_{ii}=1>0$ covers $i=j$. Hence every entry of $e^{\tau B}=\sum_k\tau^kB^k/k!$ is positive (each term is $\ge0$ and at least one term is strictly positive at each entry), and so is every entry of $P(\tau)=e^{-c\tau}e^{\tau B}$.

\emph{(iii) Exponential mixing on zero-sum vectors.} Let $\delta:=\min_{i,j}P(1)_{ij}>0$, positive by (ii). Since each column of $P(1)$ sums to $1$ and has all $n$ entries $\ge\delta$, we have $n\delta\le1$; if $n\delta=1$ replace $\delta$ by $\delta/2$, so that from now on $n\delta<1$ (this can only weaken the bound obtained below). Let $w$ be zero-sum and split it into its positive and negative parts, $w=w^+-w^-$, with $w^\pm\ge0$. Since $\sum_iw_i=0$, the two parts carry the same total mass:
\[
\sum_iw_i^+=\sum_iw_i^-=\tfrac12\abs{w}_1 .
\]
Because every entry of $P(1)$ is at least $\delta$ and $w_j^+\ge0$,
\[
\bigl(P(1)w^+\bigr)_i=\sum_jP(1)_{ij}w_j^+\ \ge\ \delta\sum_jw_j^+=\tfrac\delta2\abs{w}_1\qquad\text{for \emph{every} compartment }i,
\]
and likewise for $w^-$. So both vectors
\[
A^\sharp:=P(1)w^+-\tfrac\delta2\abs{w}_1\,\one,\qquad B^\sharp:=P(1)w^--\tfrac\delta2\abs{w}_1\,\one
\]
have all entries $\ge0$. By mass conservation applied to $w^+$ and $w^-$ separately, their totals are
\[
\sum_iA^\sharp_i=\sum_iB^\sharp_i=\tfrac12\abs{w}_1-n\cdot\tfrac\delta2\abs{w}_1=\tfrac12\abs{w}_1(1-n\delta),
\]
and, since $A^\sharp,B^\sharp\ge0$, this total is their $\ell^1$ norm. Since $P(1)w=A^\sharp-B^\sharp$, the triangle inequality gives
\[
\abs{P(1)w}_1\le\abs{A^\sharp}_1+\abs{B^\sharp}_1=(1-n\delta)\abs{w}_1 .
\]
Moreover $P(1)w$ is again zero-sum, by mass conservation, so the bound can be iterated:
\[
\abs{P(1)^mw}_1\le(1-n\delta)^m\abs{w}_1\qquad\text{for every integer }m\ge0 .
\]
For general $t\ge0$ write $t=m+s$ with $m=\lfloor t\rfloor$ and $0\le s<1$. By the semigroup property $P(t)=P(s)P(1)^m$ and part (i) (which needs no zero-sum condition),
\[
\abs{P(t)w}_1\ \le\ \abs{P(1)^mw}_1\ \le\ (1-n\delta)^m\abs{w}_1\ \le\ (1-n\delta)^{t-1}\abs{w}_1 ,
\]
using $m\ge t-1$ and $0<1-n\delta<1$. This is (iii), with $C=(1-n\delta)^{-1}$ and $\lambda=-\ln(1-n\delta)>0$.

\emph{(iv) Existence, uniqueness and positivity of $\pi$.} Let $q$ be any probability vector. For $t,s\ge0$, the semigroup property gives $P(t+s)q-P(t)q=P(t)\bigl(P(s)q-q\bigr)$. The vector $P(s)q-q$ is zero-sum (both terms are probability vectors) and $\abs{P(s)q-q}_1\le2$ by part (i). Applying (iii),
\[
\abs{P(t+s)q-P(t)q}_1\ \le\ 2Ce^{-\lambda t}\qquad\text{for \emph{every} }s\ge0 .
\]
This is the Cauchy criterion for the curve $t\mapsto P(t)q$ in $\R^n$, so $P(t)q$ converges to some limit $\pi$ as $t\to\infty$; being a limit of probability vectors, $\pi\ge0$ and $\sum_i\pi_i=1$. Taking $t\to\infty$ in $P(t+s)q=P(s)P(t)q$ gives $\pi=P(s)\pi$ for every $s\ge0$, and differentiating at $s=0$ gives $A\pi=0$. Positivity follows from $\pi=P(1)\pi$ and (ii): $\pi_i=\sum_jP(1)_{ij}\pi_j\ge\delta>0$. Finally, if $\pi'$ is another stationary probability vector, then $\pi'-\pi$ is zero-sum and fixed by every $P(t)$, so by (iii) $\abs{\pi'-\pi}_1\le Ce^{-\lambda t}\abs{\pi'-\pi}_1\to0$, i.e.\ $\pi'=\pi$.
\end{proof}

\begin{proof}[Proof of Theorem~\ref{thm:disc}]
\emph{Step 1: the total population is monotone.} Solutions of \eqref{eq:modelondMat} stay nonnegative, because a compartment that reaches zero can only receive cells: if $u_i=0$ and all other $u_j\ge0$, then $u_i'=\sum_{j\ne i}k_{ji}u_j\ge0$ (see, e.g., \cite{smith1995}). Summing the equations, the transition terms cancel and
\begin{equation}\label{eq:U}
U'=\rho(t)\,g(U),\qquad\rho(t)=\sum_ir_i(t)\,u_i(t)\ge0 .
\end{equation}
Regard $\rho$ as a given function of time; it is only piecewise continuous when the $r_i$ are, and \eqref{eq:U} is then understood in the sense of Carath\'eodory. The constant $U^*$ is a solution of \eqref{eq:U} (since $g(U^*)=0$), and, since $g$ is locally Lipschitz, solutions of \eqref{eq:U} are unique, so that two of them cannot cross. Hence $U(t)-U^*$ never changes sign, so neither does $g(U(t))$, nor $U'$. Thus $U$ is monotone and lies between $U(0)$ and $U^*$; in particular $U(t)\ge m:=\min\{U(0),U^*\}>0$, the solution exists for all times, $U(t)$ converges to a limit $U_\infty$, and
\begin{equation}\label{eq:TV}
\int_0^\infty\abs{U'(s)}\,ds=\abs{U_\infty-U(0)}<\infty .
\end{equation}

\emph{Step 2: the reaction term has one sign.} Let $F=g(U)R(t)u$ be the reaction term, so that $u'=Au+F$. Together with Step~1, where it gives the scalar equation \eqref{eq:U}, this is where (H1) is used: since $r_i\ge0$, $u_i\ge0$, and all compartments share the same factor $g(U)$, every component $F_i$ has the sign of $g(U)$. Therefore
\begin{equation}\label{eq:F}
\sum_iF_i=g(U)\rho=U',\qquad\abs{F}_1=\abs{g(U)}\rho=\abs{U'} .
\end{equation}
The total size of the reaction term is the growth rate of the total population, and by \eqref{eq:TV} it is integrable over $[0,\infty)$. No information on how fast $g(U(t))$ tends to zero is needed.

\emph{Step 3: variation of constants.} Let $d(t)=u(t)-U(t)\pi$ be the deviation of the true solution from the stationary composition, scaled by the current total population; since $\sum_iu_i=U$ and $\sum_i\pi_i=1$, $d(t)$ is zero-sum for every $t$. Because $U(t)$ is a scalar and $A\pi=0$, $A\bigl(U(t)\pi\bigr)=0$, so
\[
d'=u'-U'\pi=\bigl(Au+F\bigr)-U'\pi=Ad+\bigl(F-U'\pi\bigr),
\]
a linear, non-homogeneous equation for $d$ with forcing $\phi:=F-U'\pi$. By the variation-of-constants formula,
\begin{equation}\label{eq:voc}
d(t)=e^{tA}d(0)+\int_0^te^{(t-s)A}\phi(s)\,ds .
\end{equation}
Two facts about $\phi$ let us invoke Lemma~\ref{lem:mix}(iii), which applies only to zero-sum vectors:
\begin{itemize}[itemsep=1pt]
\item $\phi(s)$ is zero-sum: by \eqref{eq:F}, $\sum_iF_i(s)=U'(s)$, and $\sum_i\pi_i=1$, so $\sum_i\phi_i(s)=0$;
\item its size is controlled by $\abs{U'}$: by the triangle inequality, $\abs\pi_1=1$ and \eqref{eq:F},
\[
\abs{\phi(s)}_1\le\abs{F(s)}_1+\abs{U'(s)}\,\abs\pi_1=2\abs{U'(s)}.
\]
\end{itemize}
Taking $\ell^1$ norms in \eqref{eq:voc} and applying Lemma~\ref{lem:mix}(iii) to $e^{tA}d(0)$ and to each $e^{(t-s)A}\phi(s)$, we obtain
\begin{equation}\label{eq:star}
\abs{u(t)-U(t)\pi}_1\ \le\ Ce^{-\lambda t}\abs{u(0)-U(0)\pi}_1+2C\int_0^te^{-\lambda(t-s)}\abs{U'(s)}\,ds .
\end{equation}

\emph{Step 4: the deviation vanishes.} The first term tends to zero. For the second, we split the integral at the midpoint:
\[
I_1(t):=\int_0^{t/2}e^{-\lambda(t-s)}\abs{U'(s)}\,ds,\qquad I_2(t):=\int_{t/2}^te^{-\lambda(t-s)}\abs{U'(s)}\,ds .
\]
For $I_1$: when $s\in[0,t/2]$ we have $t-s\ge t/2$, so $e^{-\lambda(t-s)}\le e^{-\lambda t/2}$, and by \eqref{eq:TV}
\[
I_1(t)\ \le\ e^{-\lambda t/2}\int_0^\infty\abs{U'(s)}\,ds\ =\ e^{-\lambda t/2}\,\abs{U_\infty-U(0)}\ \xrightarrow[t\to\infty]{}\ 0 .
\]
For $I_2$: $e^{-\lambda(t-s)}\le1$, so
\[
I_2(t)\ \le\ \int_{t/2}^\infty\abs{U'(s)}\,ds\ \xrightarrow[t\to\infty]{}\ 0,
\]
because this is the tail of the convergent integral \eqref{eq:TV}. Hence $\abs{d(t)}_1\to0$. At this point we do not yet know that $U_\infty=U^*$.

\emph{Step 5: the total population reaches $U^*$.} If $U(0)=U^*$, then $U\equiv U^*$. Suppose $U(0)<U^*$ (the case $U(0)>U^*$ is symmetric) and, \emph{for contradiction}, $U_\infty<U^*$.

{(A) If $U_\infty<U^*$, then $\int_0^\infty\rho\,dt<\infty$.} On $[U(0),U_\infty]\subset(0,U^*)$, $g$ is continuous and strictly positive by (H2), hence bounded below by some $g_{\min}>0$. Since $U(t)$ stays in this interval, \eqref{eq:U} gives $\rho\le U'/g_{\min}$, and
\[
\int_0^\infty\rho(t)\,dt\ \le\ \frac{U_\infty-U(0)}{g_{\min}}<\infty .
\]

{(B) Regardless of where $U$ converges, $\int_0^\infty\rho\,dt=\infty$.} This uses (H3)--(H4) and Step~4. Indeed, since $\rho$ is a sum of nonnegative terms, keeping only the persistently proliferating compartment $s$ of (H4), for $t\ge t_r$,
\[
\rho(t)\ \ge\ r_s(t)\,u_s(t)\ \ge\ r_{\min}\,u_s(t) . \tag{$*$}
\]
Since $u=d+U\pi$, $u_s(t)=d_s(t)+U(t)\pi_s\ge m\,\pi_s-\abs{d(t)}_1$. By Step~4, $\abs{d(t)}_1\to0$, so there is $T\ge t_r$ with $\abs{d(t)}_1\le\tfrac12m\pi_s$ for all $t\ge T$ (recall $\pi_s>0$ by Lemma~\ref{lem:mix}(iv)). Then $u_s(t)\ge\tfrac12m\pi_s$ and, by $(*)$,
\[
\rho(t)\ \ge\ r_{\min}\tfrac12m\pi_s=:c_0>0\qquad\text{for all }t\ge T,
\]
so $\int_0^\infty\rho\,dt\ge\int_T^\infty c_0\,dt=\infty$.

(A) and (B) cannot both hold, so $U_\infty=U^*$, which proves (a). In words, connected plasticity keeps the share of the proliferating compartment $s$ close to $\pi_s>0$, which keeps total proliferation from stopping before $U$ reaches $U^*$.

\emph{Step 6: conclusion.} By Steps 4 and 5, $\abs{u(t)-U^*\pi}_1\le\abs{d(t)}_1+\abs{U(t)-U^*}\to0$, which proves (b). If $u^*$ solves $u'=Au$ with total $U^*$, then $u^*(t)-U^*\pi=e^{tA}\bigl(u^*(0)-U^*\pi\bigr)$, with zero-sum initial datum, which tends to zero by Lemma~\ref{lem:mix}(iii). Hence $\abs{u(t)-u^*(t)}_1\to0$, which proves (c).
\end{proof}

\begin{remark}[Phenotypic frequencies]\label{rem:freq}
Dividing $d(t)$ by $U(t)\ge m$ gives the same estimate for the frequencies $p=u/U$: $\abs{p(t)-\pi}_1\le m^{-1}\abs{d(t)}_1$. Equivalently, $p$ obeys $p'=Ap+g(U)\bigl(R-\bar r\bigr)p$ with $\bar r=\sum_ir_ip_i$: the switching dynamics perturbed by a replicator-type term with zero sum, whose size is at most $\abs{g(U)}\sum_i(r_i+\bar r)p_i=2\abs{g(U)}\bar r=2\abs{(\ln U)'}$. Since $U$ is monotone, the total selection that the frequencies undergo is limited, at most $\int_0^\infty2\abs{(\ln U)'}\,dt=2\abs{\ln(U^*/U(0))}$.
\end{remark}

\begin{remark}[Sustained oscillations under weighted competition]\label{rem:oscillations} The following counterexample shows that assumption (H1) is necessary for the asymptotic equivalence in Theorem~\ref{thm:disc}. Take three states with cyclic switching $1\to2\to3\to1$ at rate $0.2$ and weak reverse switching at rate $0.001$, proliferation rates $r=(30,\,0.01,\,0.01)$, pressures $w=(0.001,\,0.001,\,1)$ and $g(W)=1-W$. All hypotheses (H2)--(H4) hold, and $\pi=(\tfrac13,\tfrac13,\tfrac13)$. The only positive equilibrium is $u_{\rm eq}=\tfrac{500}{501}(1,1,1)$, at which $W=1$, the zero of $g$, while $U=1500/501\approx3$. The Jacobian matrix at $u_{\rm eq}$ is $A-(Ru_{\rm eq})\,w^{\top}$, whose characteristic polynomial $\mu^3+a_1\mu^2+a_2\mu+a_3$ has, in exact arithmetic,
\[
a_1=\tfrac{80527}{125250}\approx0.643,\qquad a_2=\tfrac{29155041}{167000000}\approx0.175,\qquad a_3=\tfrac{60341701}{50000000}\approx1.207 .
\]
All coefficients are positive but $a_1a_2-a_3\approx-1.09<0$, so by the Routh--Hurwitz criterion \cite{gantmacher1959} two eigenvalues have positive real part (numerically $0.310\pm0.928\,i$; the third is $-1.262$). The equilibrium is therefore unstable. Solutions cannot approach $0$ either, since for small populations $W<1$ and the population grows. Consequently, solutions starting near $u_{\rm eq}$ (off a single curve of initial conditions) never settle. Numerically, they approach a stable periodic orbit of period $\approx15.5$ along which the total population oscillates between $\approx1.5$ and $\approx43$ (Figure~\ref{fig:cycle}).

The mechanism is a \emph{delayed negative feedback}. State~1 proliferates fast but exerts almost no competitive pressure, while state~3, which proliferates slowly, exerts almost all of it. New cells produced in state~1 raise the pressure only after they have crossed the chain $1\to2\to3$, which takes about $10$ time units, comparable to the period. Delayed negative feedback is a classical source of oscillations in physiological control systems, in particular in regulated hematopoiesis \cite{mackeyglass1977,mackey1978}. Two heterogeneities are needed at the same time: if all compartments proliferate at the same rate, the composition obeys $p'=Ap$ exactly and converges to $\pi$; if all pressures are equal, we are back in Theorem~\ref{thm:disc}. In the language of the proof, what is lost is conservation: switching conserves the number of cells $U$, but it does not conserve the regulating quantity $W$, because a cell that switches phenotype also changes its competitive weight. In simulations with fast, well-connected switching we always observed convergence to $u_{\rm eq}$; whether this holds in general is open.
\end{remark}

\begin{figure}[htbp]
\centering
\includegraphics[width=0.98\textwidth]{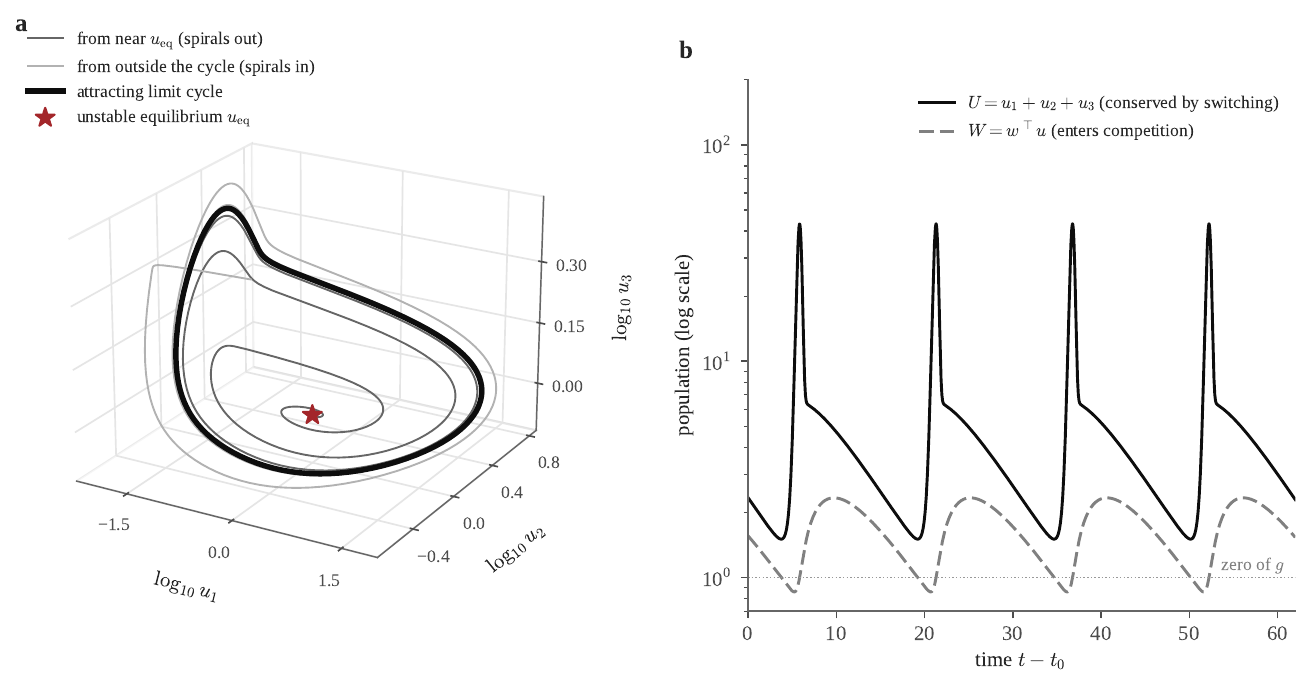}
\caption{\footnotesize\linespread{1}\selectfont\textbf{Weighted competition can destroy convergence (counterexample of Appendix~A).} \textbf{(a)} Phase space $(\log_{10}u_1,\log_{10}u_2,\log_{10}u_3)$: a trajectory starting close to the unstable equilibrium $u_{\rm eq}$ (star) spirals outwards, a trajectory starting outside spirals inwards, and both approach the same attracting limit cycle (thick line), of period $\approx15.5$. \textbf{(b)} Along the cycle, the weighted population $W=\sum_jw_ju_j$, which regulates growth, oscillates indefinitely across $1$, the zero of $g$, while the total population $U$, which is conserved by switching, oscillates between $\approx1.5$ and $\approx43$ (its equilibrium value is $1500/501\approx3$); four periods are shown.}
\label{fig:cycle}
\end{figure}

\subsection*{B. Method of lines and discretization of the advection--diffusion PDE}

We show that the compartmental model with local transitions \eqref{eq:modelondloc} is precisely a finite-difference discretization of the continuous reaction--diffusion--advection equation
\begin{equation}\label{eq:ap_pde}
    \partial_t u = f(u,x,t) - \partial_x(v u) + \partial_x(D\,\partial_x u),
\end{equation}
and we derive the identification formulas
\[
    D(x_{i+1/2}) \approx \frac{\Delta x^2}{2}\bigl(k_{i,i+1}+k_{i+1,i}\bigr),\qquad
    v(x_{i+1/2}) \approx \Delta x\,\bigl(k_{i,i+1}-k_{i+1,i}\bigr),
\]
and comment on the consistency of the scheme.

\medskip
\noindent\textbf{Grid and notation.}
Consider a bounded one-dimensional domain \(\Omega = [0,L]\). Partition it into \(n\) cells of equal width \(\Delta x = L/n\), with centers
\[
    x_i = \bigl(i-\tfrac12\bigr)\Delta x,\quad i=1,\dots,n,
\]
and interfaces
\[
    x_{i+1/2} = i\,\Delta x,\quad i=0,\dots,n,
\]
so that the outer interfaces \(x_{1/2}=0\) and \(x_{n+1/2}=L\) are the boundaries of the domain. We denote \(u_i(t) \approx u(x_i,t)\), \(v_i \approx v(x_i)\), \(D_{i+1/2} \approx D(x_{i+1/2})\), and \(f_i(u):=f(u,x_i,t)\).

\medskip
\noindent\textbf{A conservative (finite-volume) formulation.}
Integrate the PDE \eqref{eq:ap_pde} exactly over the control volume \([x_{i-1/2},x_{i+1/2}]\) surrounding node \(x_i\). Writing the total flux as \(J(x,t):=v(x,t)u(x,t)-D(x,t)\partial_xu(x,t)\), so that \eqref{eq:ap_pde} reads \(\partial_t u = f-\partial_xJ\), the fundamental theorem of calculus gives,
\begin{equation}\label{eq:ap_fv_exact}
    \frac{d}{dt}\int_{x_{i-1/2}}^{x_{i+1/2}} u(x,t)\,dx = \int_{x_{i-1/2}}^{x_{i+1/2}} f(u,x,t)\,dx \;-\;\bigl[J(x_{i+1/2},t)-J(x_{i-1/2},t)\bigr].
\end{equation}
This is the defining property of a \emph{conservative} discretization: whatever leaves cell \(i\) through an interface enters the neighboring cell, so that, summing over all cells, only the fluxes through the domain boundaries survive, as in the PDE itself.

Approximating the cell average by the nodal value, \(\frac{1}{\Delta x}\int_{x_{i-1/2}}^{x_{i+1/2}}u\,dx\approx u_i(t)\), \eqref{eq:ap_fv_exact} becomes, per unit length,
\begin{equation}\label{eq:ap_fv_semi}
    \frac{du_i}{dt} = f_i(u) - \frac{J(x_{i+1/2},t)-J(x_{i-1/2},t)}{\Delta x}.
\end{equation}
It remains to approximate the advective and diffusive parts of the flux at the interfaces.

\medskip
\noindent\textbf{Advective flux at the interface.}
We approximate \(J_{\text{adv}}(x,t)=v(x,t)u(x,t)\) at \(x_{i+1/2}\) by evaluating \(v\) exactly there and averaging \(u\) between its two neighboring nodes,
\begin{equation}\label{eq:ap_Jadv}
    J_{\text{adv}}(x_{i+1/2},t) \approx v_{i+1/2}\,\frac{u_i+u_{i+1}}{2}.
\end{equation}
Substituting \eqref{eq:ap_Jadv} (and its analogue at \(x_{i-1/2}\)) into the advective part of \eqref{eq:ap_fv_semi} gives
\begin{equation}\label{eq:ap_adv_disc}
    -\frac{J_{\text{adv}}(x_{i+1/2})-J_{\text{adv}}(x_{i-1/2})}{\Delta x}
    \approx \frac{v_{i-1/2}}{2\Delta x}u_{i-1}
    + \frac{v_{i-1/2} - v_{i+1/2}}{2\Delta x}u_i
    - \frac{v_{i+1/2}}{2\Delta x}u_{i+1}.
\end{equation}

\medskip
\noindent\textbf{Diffusive flux at the interface.}
We approximate \(J_{\text{diff}}(x,t)=-D(x,t)\partial_xu(x,t)\) at \(x_{i+1/2}\) by evaluating \(D\) exactly there and the derivative by a centered difference,
\begin{equation}\label{eq:ap_Jdiff}
    J_{\text{diff}}(x_{i+1/2},t)\approx -D_{i+1/2}\,\frac{u_{i+1}-u_i}{\Delta x}.
\end{equation}
Substituting \eqref{eq:ap_Jdiff} (and its analogue at \(x_{i-1/2}\)) into the diffusive part of \eqref{eq:ap_fv_semi} gives
\begin{equation}\label{eq:ap_diff_disc}
    -\frac{J_{\text{diff}}(x_{i+1/2})-J_{\text{diff}}(x_{i-1/2})}{\Delta x}
    = \frac{D_{i-1/2}}{\Delta x^2}u_{i-1}
       - \frac{D_{i-1/2}+D_{i+1/2}}{\Delta x^2}u_i
       + \frac{D_{i+1/2}}{\Delta x^2}u_{i+1},
\end{equation}
the standard three-point stencil for the Laplacian with variable coefficients.

\medskip
\noindent\textbf{Assembling the semi-discrete system.}
Adding \eqref{eq:ap_adv_disc} and \eqref{eq:ap_diff_disc} to \eqref{eq:ap_fv_semi}, the PDE \eqref{eq:ap_pde} at interior grid points becomes the ODE system
\begin{align}\label{eq:ap_ode_interior}
    \frac{du_i}{dt} = &
     f_i(u)
    + \left( \frac{D_{i-1/2}}{\Delta x^2} + \frac{v_{i-1/2}}{2\Delta x} \right) u_{i-1} + \left( \frac{D_{i+1/2}}{\Delta x^2} - \frac{v_{i+1/2}}{2\Delta x} \right) u_{i+1} \notag\\
    &
    - \left( \frac{D_{i-1/2}+D_{i+1/2}}{\Delta x^2} + \frac{v_{i+1/2} - v_{i-1/2}}{2\Delta x} \right) u_i.
\end{align}

\medskip
\noindent\textbf{Connection with the compartmental model.}
The local-transition model \eqref{eq:modelondloc} for an interior compartment \(i\) is
\begin{equation}\label{eq:ap_comp_interior}
    u_i' = r_i(t) u_i g_i(u,t) + k_{i-1,i} u_{i-1} + k_{i+1,i} u_{i+1} - (k_{i,i-1}+k_{i,i+1}) u_i.
\end{equation}
Assuming \(f_i(u)=r_i(t)u_ig_i(u,t)\), we identify \eqref{eq:ap_ode_interior} and \eqref{eq:ap_comp_interior} term by term. Matching the \(u_{i-1}\) coefficient, which involves only the left interface \(x_{i-1/2}\), gives
\begin{equation}\label{eq:ap_match_left}
    k_{i-1,i} = \frac{D_{i-1/2}}{\Delta x^2} + \frac{v_{i-1/2}}{2\Delta x};
\end{equation}
matching the \(u_{i+1}\) coefficient, which involves only the right interface \(x_{i+1/2}\), gives
\begin{equation}\label{eq:ap_match_right}
    k_{i+1,i} = \frac{D_{i+1/2}}{\Delta x^2} - \frac{v_{i+1/2}}{2\Delta x}.
\end{equation}
Since \(x_{i+1/2}\) is also the \emph{left} interface of compartment \(i+1\), relabeling \(i\to i+1\) in \eqref{eq:ap_match_left} gives the complementary (outward) rate at that same interface,
\begin{equation}\label{eq:ap_match_right_shifted}
    k_{i,i+1} = \frac{D_{i+1/2}}{\Delta x^2} + \frac{v_{i+1/2}}{2\Delta x};
\end{equation}
similarly, relabeling \(i\to i-1\) in \eqref{eq:ap_match_right} gives \(k_{i,i-1}=D_{i-1/2}/\Delta x^2 - v_{i-1/2}/(2\Delta x)\), and \(-(k_{i,i-1}+k_{i,i+1})\) then reproduces the coefficient of \(u_i\) in \eqref{eq:ap_ode_interior}: the interface values \((D_{i\pm1/2},v_{i\pm1/2})\) determine all three coefficients.

The rates \(k_{i,i+1}\) (forward) and \(k_{i+1,i}\) (backward) in \eqref{eq:ap_match_right_shifted} and \eqref{eq:ap_match_right} both live at the same interface \(x_{i+1/2}\), so they invert for \(D_{i+1/2}\) and \(v_{i+1/2}\). Adding them eliminates \(v\):
\begin{equation}\label{eq:ap_D_solved}
    D_{i+1/2} = \frac{\Delta x^2}{2} \bigl( k_{i,i+1} + k_{i+1,i} \bigr);
\end{equation}
subtracting them eliminates \(D\):
\begin{equation}\label{eq:ap_v_solved}
    v_{i+1/2} = \Delta x \bigl( k_{i,i+1} - k_{i+1,i} \bigr).
\end{equation}
These are exactly the relations \eqref{eq:escolhaDv} quoted in the main text.

\medskip
\noindent\textbf{The symmetric and skew-symmetric parts, \(\mathcal{D}\) and \(\mathcal{V}\).}
The tridiagonal matrix \(A\) has entries \(A_{i,i-1}=k_{i-1,i}\), \(A_{i,i+1}=k_{i+1,i}\), \(A_{ii}=-(k_{i,i-1}+k_{i,i+1})\). Write \(A=A_{\rm sym}+A_{\rm skew}\) with \(A_{\rm sym}=(A+A^\top)/2\), \(A_{\rm skew}=(A-A^\top)/2\), and define \(\mathcal D:=\Delta x^2A_{\rm sym}\), \(\mathcal V:=2\Delta xA_{\rm skew}\), so that \(A=\Delta x^{-2}\mathcal D+(2\Delta x)^{-1}\mathcal V\), as in \eqref{eq:matrixDecomp}. Using \eqref{eq:ap_match_left}--\eqref{eq:ap_match_right_shifted} and its \(i\to i-1\) shift, a direct computation gives, for the off-diagonal entries,
\[
(\mathcal D)_{i,i\mp1} = \frac{\Delta x^2}{2}(k_{i\mp1,i}+k_{i,i\mp1}) = D_{i\mp1/2}, \qquad
(\mathcal V)_{i,i\mp1} = \Delta x\,(k_{i\mp1,i}-k_{i,i\mp1}) = \pm\, v_{i\mp1/2}:
\]
the off-diagonal entries of \(\mathcal D\) and \(\mathcal V\) are, exactly, the interface values of \(D\) and \(v\), confirming that the symmetric part carries diffusion and the skew-symmetric part carries advection. The diagonal, however, does not split so cleanly. Since \(\mathcal V\) is antisymmetric by construction, \((\mathcal V)_{ii}\equiv0\) identically, for \emph{any} \(D(x)\) and \(v(x)\); correspondingly, the entire diagonal of \(A\), and hence of \(\mathcal D\), must fall on the symmetric side:
\[
(\mathcal D)_{ii} = -(D_{i-1/2}+D_{i+1/2}) - \frac{\Delta x}{2}\bigl(v_{i+1/2}-v_{i-1/2}\bigr).
\]
The first term is the familiar three-point Laplacian stencil, \((\mathcal D)_{ii}\to-2D(x_i)\) as \(\Delta x\to0\). The second term has nothing to do with diffusion, and, although it is of order \(\Delta x^2\) relative to the first, it does not vanish in the limit: after division by \(\Delta x^2\) in \eqref{eq:matrixDecomp} it contributes \(-(v_{i+1/2}-v_{i-1/2})u_i/(2\Delta x)\to-\tfrac12v'(x_i)u_i\). It is half of the compressibility term of the advective flux, \(-\partial_x(vu)=-v\,\partial_xu-v'u\); the other half is carried by the skew-symmetric part, which approximates \(-v\,\partial_xu-\tfrac12v'u\). The symmetric/skew-symmetric split therefore separates diffusion and advection exactly when \(v\) is uniform; in general, the symmetric part carries diffusion plus the local term \(-\tfrac12v'u\).

\medskip
\noindent\textbf{Two readings of the identification.}
Relations \eqref{eq:ap_D_solved}--\eqref{eq:ap_v_solved} can be read in two directions. Given smooth functions \(D(x)>0\) and \(v(x)\), they \emph{define}, for each grid spacing \(\Delta x\), a compartmental model that is a consistent discretization of \eqref{eq:ap_pde}; this is the direction used in the simulation of Figure~\ref{fig:pdesim}. Conversely, a compartmental model with many nearest-neighbour states approximates a continuous dynamics only if its rates scale appropriately: \(k_{i,i+1}\) and \(k_{i+1,i}\) must diverge as \(O(\Delta x^{-2})\), with a difference of order \(O(\Delta x^{-1})\), the familiar scaling of a biased random walk converging to a drift--diffusion process; rates held fixed as \(\Delta x\to0\) collapse both \(D\) and \(v\) to zero.

\medskip
\noindent\textbf{Boundary conditions.}
The no-flux boundary conditions at \(x=0\) and \(x=L\) are discretized in the same conservative spirit, by setting the flux through the two boundary interfaces to zero rather than approximating it from interior values: \(J(x_{1/2},t)=J(x_{n+1/2},t)=0\). For the first cell, \eqref{eq:ap_fv_semi} then reads
\[
    \frac{du_1}{dt} = f_1(u) - \frac{J(x_{3/2},t)}{\Delta x}
    \approx f_1(u) + \Bigl(\frac{D_{3/2}}{\Delta x^2}-\frac{v_{3/2}}{2\Delta x}\Bigr)u_2 - \Bigl(\frac{D_{3/2}}{\Delta x^2}+\frac{v_{3/2}}{2\Delta x}\Bigr)u_1,
\]
which is the compartmental equation for \(u_1\), with the rates \eqref{eq:ap_match_right} and \eqref{eq:ap_match_right_shifted} at the interface \(x_{3/2}\) and no transitions through \(x_{1/2}\), i.e., \(k_{0,1}=k_{1,0}=0\). The same holds for the last cell, with \(k_{n,n+1}=k_{n+1,n}=0\). Hence the compartmental model with no transitions out of \(u_1\) and \(u_n\) is a faithful discretization of the PDE with zero-flux boundary conditions.

\medskip
\noindent\textbf{Consistency and positivity.}
Both flux approximations, \eqref{eq:ap_Jadv} and \eqref{eq:ap_Jdiff}, are centered at the interfaces, so, for smooth \(u\), \(D\) and \(v\), Taylor expansion shows that \eqref{eq:ap_ode_interior} approximates the right-hand side of \eqref{eq:ap_pde} at \(x_i\) with a local truncation error \(O(\Delta x^2)\); together with the stability of centered finite-volume discretizations of parabolic equations \cite{LeVeque}, the scheme converges to \eqref{eq:ap_pde} as \(\Delta x\to0\) while the solution stays smooth. For the identification to define a legitimate compartmental (Markov chain) model, the rates \eqref{eq:ap_match_right}--\eqref{eq:ap_match_right_shifted} must be nonnegative, which holds if \(\Delta x\le 2D_{i+1/2}/|v_{i+1/2}|\), a mesh (cell) P\'eclet-number condition; otherwise an upwind treatment of the advective flux restores positivity, at the cost of first-order accuracy.

\subsection*{Acknowledgements}
\addcontentsline{toc}{section}{Acknowledgements}
The author thanks Prof.~Ingmar Glauche for suggestions on an early draft of this manuscript. Funding: This work was supported by CAPES and CNPq (Brazil), the Alexander von Humboldt Foundation (Germany), and partially by FAPEMIG (Brazil).

\subsection*{Declaration of competing interest}
\addcontentsline{toc}{section}{Declaration of competing interest}
The author declares no competing interests.

\subsection*{Data availability}
\addcontentsline{toc}{section}{Data availability}
This is a theoretical study; no new data were generated or analyzed. The Python code that generates the figures, including the control points of the multi-well landscape, is available at \url{https://github.com/arturfassoni/sgd-epigenetic-landscape-review}.

\subsection*{Declaration of generative AI and AI-assisted technologies in the manuscript preparation process}
\addcontentsline{toc}{section}{Declaration of generative AI and AI-assisted technologies in the manuscript preparation process}
During the preparation of this work the author used Claude (Anthropic) in order to search and verify parts of the scientific literature, and to assist with translation and language editing. After using this tool, the author reviewed and edited the content as needed and takes full responsibility for the content of the published article.


\end{document}